\newcommand{\BIGCOMMENT}[1]{}
\newcommand{\COMMENT}[1]{}
\newcommand{\ket}[1]{|#1\rangle}
\newtheorem{definition}{Definition}
\newtheorem*{theorem*}{Theorem}
\newtheorem{theorem}[definition]{Theorem}
\newtheorem{claim}{Claim}
\newtheorem{fact}{Fact}
\newtheorem{question}{Question}
\newtheorem{corollary}{Corollary}
\newtheorem{lemma}{Lemma}
\newtheorem{remark}{Remark}
\title{Quantum Query Complexity of Subgraph Isomorphism and Homomorphism}
\author{Raghav Kulkarni\\ Centre for Quantum Technologies \& Nanyang Technological University,\\ Singapore \\ \texttt{kulraghav@gmail.com}
\and \\ Supartha Podder\\  Centre for Quantum Technologies, National University of Singapore,\\ Singapore\\ \texttt{supartha@gmail.com}}
\date{}
\begin{document}

\maketitle

\begin{abstract}
Let $H$ be a fixed graph on $n$ vertices. Let $f_H(G) = 1$ iff
the input graph $G$ on $n$ vertices contains $H$ as a (not necessarily induced) subgraph. 
Let $\alpha_H$ denote the cardinality of a maximum independent set of $H$.
In this paper we show:
\[Q(f_H) = \Omega\left( \sqrt{\alpha_H \cdot n}\right),\] 
where $Q(f_H)$ denotes the quantum query complexity of $f_H$.

As a consequence we obtain a lower bounds for $Q(f_H)$ 
in terms of several other parameters of $H$ such as
the average degree, minimum vertex cover, chromatic number, and the critical probability.

We also use the above bound to show that $Q(f_H) = \Omega(n^{3/4})$ for any $H$, 
improving on the
previously best known bound of $\Omega(n^{2/3})$ \cite{sy}. 
Until very recently, it was believed that the quantum query complexity is at least square root of the randomized one.
Our $\Omega(n^{3/4})$ bound for $Q(f_H)$ matches the square root of the current best known bound
for the randomized query complexity of $f_H$, which is $\Omega(n^{3/2})$
due to Gr\"oger \cite{groger}.
Interestingly, the randomized bound of $\Omega(\alpha_H \cdot n)$ for $f_H$ still remains open.

We also study the Subgraph Homomorphism Problem, denoted by $f_{[H]}$, and show that $Q(f_{[H]}) = \Omega(n)$.

Finally we extend our results to the $3$-uniform hypergraphs. In particular, we show an $\Omega(n^{4/5})$ bound for quantum query complexity of the Subgraph Isomorphism, improving on the previously known $\Omega(n^{3/4})$ bound.
For the Subgraph Homomorphism, we obtain an $\Omega(n^{3/2})$ bound for the same.

\end{abstract}

{\bf keywords:} Quantum Query Complexity, Subgraph Isomorphism, Monotone Graph Properties.

\section{Introduction}
\label{sec:intro}
\subsection{Classical and Quantum Query Complexity}
The decision tree model (aka the query model), perhaps due to its  simplicity and fundamental nature, has been extensively studied in the past
and still remains a rich source of many fascinating investigations.
In this paper we focus on Boolean functions, i.e., the functions of the form $f : \{0,1\}^n \to \{0, 1\}.$
 A deterministic decision tree $T_f$ for $f$
takes $x = (x_1, \ldots, x_n)$ as an input and determines the value
of $f(x_1, \ldots, x_n)$ using queries of the form 
$\text{``is } x_i = 1? \text{"}$  Let $C (T_f, x)$ denote the cost of the computation, that is the number of queries made by $T_f$ on
an input $x.$ The {\em deterministic decision tree complexity} (aka the deterministic query complexity) of $f$ is defined as
\[D(f) = \mathop{\min}_{T_f} \max_x C (T_f, x). \]

We encourage the reader to see an excellent survey by Buhrman and de~Wolf \cite{bw} 
on the decision tree complexity of Boolean functions.

A randomized decision tree $\mathcal{T}$ is simply a probability distribution on the deterministic decision trees $\{T_1, T_2, \ldots\}$ where the tree $T_i$ occurs with
probability $p_i.$ We say that $\mathcal{T}$ computes $f$ correctly if for every input
$x$: $\Pr_{i}[T_i(x) = f(x)] \geq 2/3$. The depth of $\mathcal{T}$ is the maximum depth
of a $T_i$. {\em The (bounded-error) randomized query complexity} of $f$, denoted by $R(f)$,  is the minimum possible
depth of a randomized tree computing $f$ correctly on all inputs. 

One can also define the quantum version of the decision tree model as follows:
Start with an $N$-qubit state $\ket{0}$ consisting of all zeros. We can transform this state by applying an unitary transformation $U_0$, then we can make a {\em quantum} query $O$, which essentially negates the amplitude of each basic state depending on whether the $i$th bit of the basic state is zero or one. A quantum algorithm with $q$ queries looks like the following: $A = U_q O U_{q-1} \cdots O U_1 O U_0$. Here $U_i$'s are fixed unitary transformation independent of the input $x$. The final state $A\ket{0}$ depends on the input $x$ only via applications of $O$. We measure the final state outputing the rightmost qubit (WLOG there are no intermediate measurements). 
A bounded-error quantum query algorithm $A$ computes $f$ correctly if the final measurement gives the correct answer with probability at least $2/3$ for every input $x$.
{\em The bounded-error quantum query complexity} of $f$, denoted by $Q(f)$, is the least $q$ for which $f$ admits a bounded-error quantum algorithm. 
We refer the reader to a survey by Buhrman and de~Wolf \cite{bw} for more precise definition.

\subsection{Subgraph Isomorphism Problem}
Let $H$ be a fixed graph on $n$ vertices (possibly with isolated vertices) and let $G$ be an unknown input graph (on $n$ vertices) given by query access to its edges, i.e, queries of the form
``Is $\{i,j\}$ an edge in $G$?". 
We say $H \leq G$ if $G$ contains $H$ as a (not necessarily induced) subgraph.
 Let $f_H : \{0,1\}^{n \choose 2} \to \{0, 1\}$ be
defined as follows: 
\begin{equation}
f_H(G) =
\left\{
	\begin{array}{ll}
		1  & \mbox{if  }  H \leq G\\
		0 & \mbox{} \text{otherwise}
	\end{array}
\right.
\end{equation}

The well-known Graph Isomorphism Problem asks whether a graph $H$ is isomorphic to another graph $G$. The 
Subgraph Isomorphism Problem is a generalization of the Graph Isomorphism Problem where one asks whether $H$ is isomorphic to a subgraph of $G$.
Several central computational problems for graphs such as containing a clique, containing a Hamiltonian cycle, containing a perfect matching can be formulated as the Subgraph Isomorphism Problem by fixing the $H$ appropriately.  Given the generality
and importance of the problem, people have investigated various restricted special cases of this problem in different models of computation \cite{wiki} \cite{lrr}. In the context of query complexity, 
in 1992 Gr\"oger \cite{groger} studied this problem in the randomized setting and showed that $R(f_H) = \Omega(n^{3/2})$, which is the best known bound to this date. In this paper we investigate this problem in the quantum setting. To the best of our knowledge, quantum query complexity for the Subgraph Isomorphism Problem has not been noted prior to this work when $H$ is allowed to be any graph on $n$ vertices. A special case of this problem when $H$ is of a constant size has been investigated before for obtaining upper bounds~\cite{LMS}.

\subsection{Subgraph Homomorphism Problem}
 We also investigate a closely related Subgraph Homomorphism Problem.

 A homomorphism from a graph $H$ into a graph $G$ is a function $h : V(H) \to V(G)$
such that: if $(u,v) \in E(H)$ then $(h(u), h(v)) \in E(G)$.

Let $f_{[H]}$ be the function defined as follows:
$f_{[H]}(G) = 1$ if and only if $H$ admits a homomorphism into $G.$

Note that unlike the isomorphism, the homomorphism need not be an injective function from $V(H)$ to $V(G)$.
We study the query complexity of the Subgraph Homomorphism Problem towards the end of this paper.
In the next section, we review the relevant literature.

\subsection{Related Work}
Understanding the query complexity of monotone graph properties has a long history. In the deterministic setting the Aanderaa-Rosenberg-Karp Conjecture asserts that one must query all the $n \choose 2$ edges in the worst-case. The randomized complexity of monotone graph properties is conjectured to be $\Omega(n^2)$. Yao \cite{yao87} obtained the first
super-linear lower bound in the randomized setting using the graph packing arguments. 
Subsequently his bound was improved by King \cite{king} and later by Hajnal \cite{hajnal}. The current best known bound is
$\Omega(n^{4/3}\sqrt{\log n})$ due to Chakrabarti and Khot \cite{ck}. Moreover, O'Donnell, Saks, Schramm, and Servedio \cite{osss} also obtained an $\Omega(n^{4/3})$ bound via a more generic approach for monotone transitive functions. 
Friedgut, Kahn, and Wigderson \cite{fkw} obtain an $\Omega(n/p)$ bound where the $p$ is the critical probability 
of the property. 
In the quantum setting, Buhrman, Cleve, de~Wolf and Zalka \cite{bcwz} were the first to study quantum complexity of graph properties. Santha and Yao \cite{sy} obtain an $\Omega(n^{2/3})$ bound for general properties. Their proof follows 
along the lines of Hajnal's proof. 

Gr\"oger \cite{groger} obtained
an $\Omega(n^{3/2})$ bound for the randomized query complexity of the Subgraph Isomorphism.
This is currently the best known bound for the Subgraph Isomorphism Problem. 
Until very recently\footnote{Very recently this has been falsified by Ben-David \cite{bd}.}, it was believed that the quantum query complexity is at least square root of the randomized one. In this paper we address the quantum query complexity of the Subgraph Isomorphism Problem and obtain the square root of the current best randomized bound.

The main difference between the previous work and this one is that all
the previous work, including that of Santha and Yao \cite{sy}, obtained the lower bounds based on an embedding of a {\em tribe} function \cite{bbcmw} on a large number of variables
 in monotone graph properties. Recall that the tribe function with parameters $k$ and $\ell$, is a function  $T(k, \ell)$ on $k \cdot \ell$ variables defined as: $\bigvee_{i \in [k]} \bigwedge_{j \in [\ell]} x_{ij}$.
This method yields a lower bound of $\Omega(k \cdot \ell)$ for the randomized query complexity and $\Omega(\sqrt {k \cdot \ell})$ for the quantum.
 We deviate from this line by embedding a threshold function $T_n^t$ instead of a tribe.
 Recall that $T_n^t(z_1, \ldots, z_n)$ is a function on $n$ variables that evaluates to $1$ if and only if at least $t$ of the $z_i$'s are $1$.
 Since the randomized complexity of $T_n^t$ is $\Theta(n)$, this does not give us any advantage for obtaining super-linear randomized
 lower bounds. However, it {\em does} yield an advantage for the quantum lower bounds as the quantum
 query complexity of $T_n^t$ is $\Theta(\sqrt{n(n-t)})$, which can reach up to $\Omega(n)$ for large $t$. Since this technique works only in the quantum setting, the randomized versions of our bounds remain intriguingly open.

\subsection{Our Results}
Our main result is a lower bound on the quantum query complexity of the Subgraph Isomorphism Problem for $H$ in terms of the maximum independence number of $H$.
\begin{theorem} For any $H$,
\[ Q(f_H) = \Omega\left(\sqrt{\alpha_H \cdot n} \right),\]
where $\alpha_H$ denotes the size of a maximum independent set of $H$.
\label{thm:alpha}
\end{theorem}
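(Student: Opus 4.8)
The plan is to exhibit, for every $H$, a restriction of $f_H$ (obtained by fixing some of the edge bits) on which $f_H$ computes a threshold function $T_m^t$ with $t(m-t+1)=\Omega(\alpha_H\cdot n)$. Since hard‑wiring the fixed bits turns a quantum algorithm for $f_H$ into one for the restricted function, $Q(f_H)\ge Q(T_m^t)$, and the quantum query complexity of the threshold function is $Q(T_m^t)=\Theta(\sqrt{t(m-t+1)})$ (the tight bound following from the approximate degree / adversary analysis of symmetric functions). Choosing the threshold at $t\approx\min(\alpha_H,\lceil n/2\rceil)$ — so that $t\ge\alpha_H/2$ and $m-t+1=\Omega(n)$, using $\alpha_H\le n$ — then yields $Q(f_H)=\Omega(\sqrt{\alpha_H n})$. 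A restriction of the stated form is the set of inputs $\{G_0\cup Z: Z\subseteq\mathcal V\}$ for a fixed ``scaffold'' graph $G_0$ and a set $\mathcal V$ of ``free'' non‑edges of $G_0$; on it $f_H$ equals $1$ iff $G_0\cup Z$ contains a copy of $H$, and the task is to engineer $G_0$ and $\mathcal V$ so that this happens precisely when $|Z|\ge t$. (Note that $\mathcal V$ may need as many as $\binom{\alpha_H}{2}=\Theta(\alpha_H^2)$ slots, so a threshold on only $n$ bits does not suffice for all $H$.)

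The scaffold is read off from the independence structure of $H$. Let $I$ be a \emph{maximum} independent set, so $|I|=\alpha_H$, and let $S=V(H)\setminus I$ be the complementary vertex cover; maximality of $I$ forces $S$ to contain no isolated vertex of $H$. The scaffold consists of a fixed ``cover part'' hosting $S$ on $\approx n-\alpha_H$ vertices of $G$, together with an essentially independent ``blob'' of size $\approx\alpha_H$ on the remaining vertices, which will host the images of $I$; the free slots are (non‑)edges inside or touching the blob. Using a \emph{maximum} independent set is essential: it makes the blob — hence $|\mathcal V|$, up to $\Theta(\alpha_H^2)$ — as large as possible, and, because $I$ carries no required edges, it lets the images of $I$ be rerouted, which is what produces genuine threshold behaviour instead of a rigid all‑or‑nothing pattern. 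The sizes are tuned so that $G_0$ is just barely $H$‑free and the threshold $t$ — the least number of free edges whose addition creates a copy of $H$ — satisfies $t(m-t+1)=\Omega(\alpha_H n)$. The two extremes illustrate the mechanism: $t=1$ (an $\mathrm{OR}$) arises for sparse $H$, e.g.\ by installing a K\"onig/toughness‑type deficiency of size $\Theta(\alpha_H)$ so that $\Omega(\alpha_H n)$ distinct single edges each complete a copy of $H$; $t=m$ (an $\mathrm{AND}$) arises for dense $H$, by taking a canonical copy of $H$ all of whose $\Omega(\alpha_H n)$ edges must be present. It is convenient to split into $\alpha_H\le n/2$ and $\alpha_H>n/2$.

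The hard part is the scaffold construction, carried out uniformly in $H$: one must produce a fixed $H$‑free $G_0$ and free set $\mathcal V$ such that (i) $G_0$ is $H$‑free, (ii) adding \emph{any} $t$ of the free slots produces a copy of $H$ while adding any $t-1$ does not, so that the restricted function is exactly $T_m^t$, and (iii) $t(m-t+1)$ is genuinely $\Omega(\alpha_H n)$. Item (ii) is delicate in two directions: proving that no copy of $H$ appears after only $t-1$ additions requires $G_0$ to be robustly $H$‑free (for Hamiltonicity‑like $H$ this is a clean toughness/deficiency count, but in general one must control how the fixed part can host subgraphs), and proving that \emph{every} $t$‑subset completes a copy requires ruling out that a copy of $H$ slides off the intended cover‑part/blob split — handled by making the cover part rigid enough (or attaching small distinguishing gadgets) that $S$ is forced onto the cover part and $I$ onto the blob, after which the copy is governed entirely by $Z$. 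Item (iii) is exactly where maximality of $I$ (equivalently minimality of $S$) is used: a smaller independent set shrinks the blob and weakens the bound. The remaining steps — that the cover part can host $H[S]$, that a size‑$\Theta(\alpha_H)$ blob together with $t$ extra edges can absorb the ``excess'' into a copy of $H$, and the arithmetic $t(m-t+1)=\Omega(\alpha_H n)$ — are routine once the gadget is fixed.
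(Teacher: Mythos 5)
There is a genuine gap: the entire content of the theorem sits inside the step you yourself label ``the hard part'' --- producing, for an \emph{arbitrary} $H$, a scaffold $G_0$ and free slots $\mathcal V$ on which $f_H$ restricts to a (promise) threshold with $t(m-t+1)=\Omega(\alpha_H n)$ --- and no construction is given. The difficulty is not just bookkeeping. Because your free slots are \emph{edge} variables, the requirement that every $t$-subset of $\mathcal V$ completes a copy of $H$ while no $(t-1)$-subset does forces the containment predicate to depend only on $\lvert Z\rvert$ at the critical levels; since a $t$-subset of edges can be a star, a matching, a triangle, etc., this is a severe symmetry constraint that pins down $\mathcal V$ heavily (e.g.\ for $H$ a perfect matching one can show any valid $\mathcal V$ with $t\ge 2$ must itself be a matching, since every completing set must be used in its entirety). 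Such scaffolds do exist for some $H$, but you give no uniform construction and no argument that the constraint is always satisfiable with the required product $t(m-t+1)=\Omega(\alpha_H n)$; as written the proof does not go through. (Your parenthetical claim that a threshold on only $n$ bits cannot suffice is also misleading --- see below.)

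The paper sidesteps the construction entirely by changing what the threshold variables are. It works with the dual $f_H^*$ and substitutes $x_{ij}:=z_i z_j$, i.e.\ it evaluates $f_H^*$ only on cliques $K_{\lvert z\rvert}$ indexed by a \emph{vertex} subset. Vertex-transitivity of graph properties then makes the composed function $f'(z)=f_H^*(K_{\lvert z\rvert})$ automatically symmetric, hence exactly $T_n^{t}$ on $n$ vertex bits, where the packing lemma identifies $t=\alpha_H+1$ (since $H\le \overline{K_s}$ iff $s\le\alpha_H$). The substitution at most doubles the approximate degree, and Paturi's bound gives $\Omega(\sqrt{t(n-t)})=\Omega(\sqrt{\alpha_H n})$ when $\alpha_H\le n/2$; the case $\alpha_H>n/2$ is handled separately by restricting $f_H$ to a star-containment problem $f_{S_d}$ via the small vertex cover and proving $Q(f_{S_d})=\Omega(n)$. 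So your high-level plan (embed a threshold, split on $\alpha_H$ versus $n/2$) matches the paper's, but the mechanism that makes the threshold appear --- duality plus the clique substitution rather than a subcube restriction on edge slots --- is the missing idea, and without it the argument is incomplete.
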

\begin{corollary} For any $H$,
\begin{enumerate}
\item $Q(f_H) = \Omega\left(\frac{n}{\sqrt{d_{avg}(H)}}\right),$
\item $Q(f_H) = \Omega\left(\frac{n}{\sqrt {\chi_H}}\right),$
\item $ Q(f_H) = \Omega\left(\sqrt{\frac{n}{p}}\right),$
\end{enumerate}
where $d_{avg}(H)$ denotes the average degree of the vertices of $H$, $\chi_H$ denotes the chromatic number of $H$, and
$p$ denotes the critical probability \cite{fkw} of $H$.
\label{cor:d-avg}
\label{cor:chi}
\label{cor:n/p}
\end{corollary}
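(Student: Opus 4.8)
The plan is to obtain all three parts directly from Theorem~\ref{thm:alpha} by substituting, in each case, a classical lower bound on $\alpha_H$ expressed through the relevant parameter of $H$; no new query-complexity argument is needed. In particular the whole corollary reduces to three elementary graph-theoretic estimates.

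For part~(1) I would invoke the Caro--Wei bound $\alpha_H \ge \sum_{v \in V(H)} \frac{1}{d(v)+1}$ and then apply Jensen's inequality to the convex map $x \mapsto \frac{1}{x+1}$, which gives $\sum_{v \in V(H)} \frac{1}{d(v)+1} \ge \frac{n}{d_{avg}(H)+1}$. Hence $\alpha_H \cdot n \ge \frac{n^2}{d_{avg}(H)+1} \ge \frac{n^2}{2\,d_{avg}(H)}$ as soon as $d_{avg}(H) \ge 1$, and Theorem~\ref{thm:alpha} yields $Q(f_H)=\Omega\bigl(n/\sqrt{d_{avg}(H)}\bigr)$. (When $d_{avg}(H)<1$ the claimed bound exceeds $n \ge \sqrt{\alpha_H n}$ and so cannot follow from Theorem~\ref{thm:alpha} at all; this degeneracy — $H$ too sparse — is implicitly excluded throughout the corollary.) For part~(2) I would use that a proper colouring of $H$ with $\chi_H$ colours partitions $V(H)$ into $\chi_H$ independent sets, so the largest colour class is an independent set of size at least $n/\chi_H$; thus $\alpha_H \ge n/\chi_H$, giving $\alpha_H \cdot n \ge n^2/\chi_H$ and therefore $Q(f_H)=\Omega\bigl(n/\sqrt{\chi_H}\bigr)$ by Theorem~\ref{thm:alpha}.

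For part~(3) the cleanest route is to piggyback on part~(1): I would first show that the critical probability $p$ of $f_H$ satisfies $p = \Omega\bigl(d_{avg}(H)/n\bigr)$. Indeed, any $G$ with $f_H(G)=1$ contains at least $e(H)=n\,d_{avg}(H)/2$ edges, so $\Pr[f_H(G(n,p))=1] \le \Pr\bigl[e(G(n,p)) \ge e(H)\bigr]$, and by Markov's inequality the right-hand side is below $1/2$ once $\binom{n}{2}p < e(H)/2$; hence $p \ge e(H)/n^2 = \Omega\bigl(d_{avg}(H)/n\bigr)$, equivalently $\sqrt{n/p}=O\bigl(n/\sqrt{d_{avg}(H)}\bigr)$. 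Combining this with $Q(f_H)=\Omega\bigl(n/\sqrt{d_{avg}(H)}\bigr)$ from part~(1) gives $Q(f_H)=\Omega\bigl(\sqrt{n/p}\bigr)$. (One could instead argue directly that $\alpha_H = \Omega(1/p)$ and plug into Theorem~\ref{thm:alpha}, but the edge-counting argument above is self-contained and avoids any random-graph threshold machinery.)

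There is no serious obstacle once Theorem~\ref{thm:alpha} is in hand: each part is a one- or two-line deduction. The only points that need a little care are (a) using Caro--Wei rather than the weaker $\alpha_H \ge n/(\Delta(H)+1)$ in part~(1), since the maximum degree $\Delta(H)$ cannot be traded for $d_{avg}(H)$ in general, and (b) in part~(3), recognising that the critical probability is bounded below simply because containing $H$ forces $e(H)$ edges to be present. I would also make the non-degeneracy hypothesis explicit in the statement (or a remark), since for $H$ with $o(n)$ edges the right-hand sides of~(1) and~(3) exceed the trivial upper bound $O(\binom{n}{2})$ on $Q(f_H)$ and the bounds are vacuous in that regime.
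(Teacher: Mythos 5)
Your proposal is correct and follows essentially the same route as the paper: part (1) via the Tur\'an/Caro--Wei bound $\alpha_H \gtrsim n/d_{avg}(H)$, part (2) via the largest colour class giving $\alpha_H \ge n/\chi_H$, and part (3) by showing $d_{avg}(H) = O(pn)$ and reducing to part (1). The only differences are that you spell out the Markov-inequality justification for the critical-probability estimate (which the paper merely asserts) and you explicitly flag the very-sparse regime where the claimed bounds cannot follow from Theorem~\ref{thm:alpha} --- a caveat the paper silently omits.
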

In particular, we get an $\Omega(n)$ bound when the graph $H$ is sparse ($|E(H)| = O(n)$),
or $H$ has a constant chromatic number, or the critical probability of $H$ is $O(1/n)$. 
Friedgut, Kahn, and Wigderson \cite{fkw} show an $\Omega(n/p)$ bound for the randomized query complexity of general monotone properties. Quantization of this bound remains open.
General monotone properties can be thought of as the Subgraph Isomorphism for a {\em family}
of minimal subgraphs. The item 3 above, gives a quantization of \cite{fkw}
in the case when the family contains only a single subgraph.  

\begin{corollary}
For any $H$,
\[Q(f_H) = \Omega(n^{3/4}). \]
\label{cor:iso}.
\end{corollary}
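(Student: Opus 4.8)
The plan is to marry Theorem~\ref{thm:alpha}, which is strong when $\alpha_H$ is large, with a complementary lower bound that is strong precisely when $\alpha_H$ is small --- equivalently, when $H$ is edge-dense --- and then to check that the two bounds cross at $n^{3/4}$. The complementary bound I have in mind is
\[ Q(f_H) \;=\; \Omega\!\left(\sqrt{|E(H)|}\,\right)\qquad\text{for every }H, \]
which I would prove by a short restriction argument: identify $V(G)$ with $V(H)$, hard-wire every non-edge of $H$ to be \emph{absent} in $G$, and keep the $m:=|E(H)|$ pairs that are edges of $H$ as free Boolean variables $x_e$. On this family $E(G)\subseteq E(H)$ always holds, so any embedding of $H$ into $G$ is a bijection of $V(H)$ that sends $E(H)$ injectively into $E(G)$; this forces $|E(G)|\ge|E(H)|$ and hence $E(G)=E(H)$, i.e.\ $x_e=1$ for every $e$, while conversely the identity embedding witnesses $H\le G$ once all $x_e=1$. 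Therefore the restricted function is exactly the $m$-bit \textsc{and}, and since restricting inputs cannot increase query complexity, $Q(f_H)\ge Q(\textsc{and}_m)=\Theta(\sqrt m)$ by the tight quantum bound for \textsc{and}.

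Given this, the corollary follows by a dichotomy on the number of edges. If $|E(H)|\ge n^{3/2}$, the complementary bound already gives $Q(f_H)=\Omega(\sqrt{n^{3/2}})=\Omega(n^{3/4})$. If $|E(H)|<n^{3/2}$, then the average degree satisfies $d_{avg}(H)=2|E(H)|/n<2\sqrt n$, so Corollary~\ref{cor:d-avg}(1) gives $Q(f_H)=\Omega\!\big(n/\sqrt{d_{avg}(H)}\big)=\Omega(n/n^{1/4})=\Omega(n^{3/4})$; equivalently, one can feed the Tur\'an/Caro--Wei estimate $\alpha_H\ge n/(d_{avg}(H)+1)=\Omega(\sqrt n)$ into Theorem~\ref{thm:alpha} (and the degenerate case $E(H)=\emptyset$ is covered by Theorem~\ref{thm:alpha}, which then gives $\Omega(n)$). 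In either case $Q(f_H)=\Omega(n^{3/4})$.

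I do not expect a genuine obstacle here: the only new ingredient is the inequality $Q(f_H)=\Omega(\sqrt{|E(H)|})$, whose one subtle point is the observation that forcing $G$ to be a subgraph of $H$ on the \emph{same} vertex set turns ``$G$ contains $H$'' into ``$G$ equals $H$.'' The main conceptual content is simply noticing that the regime in which Theorem~\ref{thm:alpha} weakens (small independence number) is exactly the dense regime in which a large \textsc{and} is trivially embeddable, and that the crossover of the two bounds lands at the claimed exponent $3/4$.
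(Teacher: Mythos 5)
Your proposal is correct and follows essentially the same route as the paper: the same dichotomy at average degree $\sqrt n$ (equivalently $|E(H)|$ versus $n^{3/2}$), with the sparse case handled by Corollary~\ref{cor:d-avg}(1) exactly as in the paper, and the dense case handled by your $\Omega(\sqrt{|E(H)|})$ bound, which is just the content of the paper's minimal-certificate lemma (Lemma~\ref{lem:certificate}) applied to the minimal certificate $H$ itself --- your restriction-to-\textsc{and} argument is the standard proof of that lemma. (Both your argument and the paper's implicitly assume $E(H)\neq\emptyset$ so that $f_H$ is non-constant.)
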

Prior to this work only an $\Omega(n^{2/3})$ bound was known from the work of Santha and Yao \cite{sy} on general monotone graph properties.

We extend our result to the $3$-uniform hypergraphs. In particular, we show:
\begin{theorem}
\label{thm:4over5}
\label{thm:3-uniform}
 Let $H$ be a $3$-uniform hypergraph on $n$ vertices. Then,
\[Q(f_H) = \Omega(n^{4/5}).\]
\end{theorem}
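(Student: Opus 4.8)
The plan is to carry over to $3$-uniform hypergraphs the two-pronged argument behind the graph bound of Corollary~\ref{cor:iso}: one lower bound that is large when $H$ has a big independent set (the analogue of Theorem~\ref{thm:alpha}), a second that is large when $H$ has many edges, and a Turán-type inequality that forces at least one of them to reach $\Omega(n^{4/5})$. The ``many edges'' half transfers unchanged: restrict $f_H$ to the subcube of inputs $G$ that are sub-hypergraphs of a single fixed labelled copy $H_0$ of $H$ on the $n$ vertices. Since $H$ has $|E(H)|$ hyperedges it admits no embedding into a hypergraph with fewer than $|E(H)|$ hyperedges, so on this subcube $f_H(G)=1$ iff no hyperedge of $H_0$ has been deleted; that is, $f_H$ restricted there equals $\mathrm{AND}_{|E(H)|}$, and hence $Q(f_H)\ge Q(\mathrm{AND}_{|E(H)|})=\Omega(\sqrt{|E(H)|})$ by the optimal Grover lower bound. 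This half is strong precisely when $\alpha_H$ is small.

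The technical core is the $3$-uniform analogue of Theorem~\ref{thm:alpha}, a lower bound in terms of $\alpha_H$ obtained by reducing a threshold function $T_N^t$ to $f_H$. Take a maximum independent set $I$ of $H$ with $|I|=\alpha_H$, put $W=V(H)\setminus I$, and in the unknown $G$ freeze a ``scaffold'': a fixed labelled copy of $H[W]$ together with every hyperedge that a vertex of $I$ would need once placed on top of this copy, arranging the free coordinates so that $H\le G$ holds exactly when at least $t$ of $N$ designated ``slots'' are available to complete the scaffold. Such a reduction gives $Q(f_H)=\Omega(\sqrt{N(N-t)})$. This half is strong when $\alpha_H$ is large, and the point where the $3$-uniform argument has to go beyond a verbatim copy of the graph proof is in squeezing out a large enough gap $N-t$ (relative to $N\approx n$, or to a larger $N$ obtained by letting slots be small configurations rather than single vertices) to make the final balance land at exponent $4/5$ rather than $3/4$.

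The two halves are combined through a hypergraph covering/Turán estimate: $\alpha_H=s$ means no $(s+1)$-set of vertices is independent, equivalently the complement $3$-graph is $K^{(3)}_{s+1}$-free, and this forces $|E(H)|$ to be polynomially large in $n/s$. Substituting this bound into $\Omega(\sqrt{|E(H)|})$ yields a lower bound decreasing in $\alpha_H$, which when taken together with the threshold bound (increasing in $\alpha_H$) and optimized over $\alpha_H\in\{1,\dots,n\}$ gives $Q(f_H)=\Omega(n^{4/5})$ for every $3$-uniform $H$.

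I expect the main obstacle to be the threshold embedding, and in particular making the scaffold \emph{rigid}: one must ensure that no copy of $H$ can sit inside $G$ other than in the intended scaffold-respecting way, which is the usual delicate step in these reductions and is harder for $3$-uniform hypergraphs because a hyperedge of $H$ can have two of its three endpoints in $I$ (so the ``$I$-part'' of an embedding is not merely a set of pairwise-unconstrained vertices). The remaining points — keeping the reduction query-efficient so that each threshold bit costs $O(1)$ queries to $G$, handling the isolated vertices that $H$ may contain, and matching the threshold parameters to the exact exponent of the covering bound so the two estimates cross at $n^{4/5}$ — are comparatively routine, though the last is where any slack in the hypergraph Turán estimate would directly weaken the theorem.
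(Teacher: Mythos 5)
Your plan works in one regime but misses the regime that actually determines the exponent. The threshold embedding (which the paper carries out much more cleanly on the dual: substituting $x_{ijk}=z_iz_jz_k$ turns $f_H^*$ into exactly $T_n^{t}$ with $t=\alpha_H+1$, so no scaffold and no rigidity issue ever arises) has approximate degree $\Theta\bigl(\sqrt{n\cdot\min(\alpha_H,\,n-\alpha_H)}\bigr)$ by Paturi. This is \emph{not} increasing in $\alpha_H$: it peaks at $\alpha_H=n/2$ and falls back to $O(\sqrt n)$ as $\alpha_H\to n$, and no embedding whose restriction is a threshold with jump at $\alpha_H+1$ can do better, since Paturi's bound is tight. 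Meanwhile the guarantee your Tur\'an-plus-certificate half provides, $\Omega\bigl(\sqrt{n^3/\alpha_H^2}\bigr)$, is also only $\Omega(\sqrt n)$ when $\alpha_H$ is near $n$. Concretely, take $H$ to consist of all hyperedges through a single vertex $v$ whose link graph has $n^{7/5}$ edges: then $\alpha_H=n-1$ and $|E(H)|=n^{7/5}$, so your two bounds give $\Omega(\sqrt n)$ and $\Omega(n^{7/10})$ respectively, both below $n^{4/5}$. The optimization over $\alpha_H$ you describe therefore bottoms out around $\sqrt n$, not $n^{4/5}$.

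The paper's proof coincides with your plan only in the case $\alpha_H\le n/2$, where it in fact gets $\Omega(n^{5/6})$ by exactly your Tur\'an-versus-threshold balance (split on whether the average degree exceeds $n^{2/3}$). The entire difficulty, and the source of the exponent $4/5$, is the case $\alpha_H>n/2$, which is handled by a different mechanism: take a minimum vertex cover $C$ of $H$, fix a hyperclique on $|C|-1$ vertices joined to everything else, and consider the projection (link) graphs $G_i$ of the cover vertices on $V\setminus C$. If some $|E(G_i)|=O(n^{7/5})$, the resulting small certificate combined with the transitive packing lemma (Lemma~\ref{lem:transitive-packing}) forces every dual certificate to have size $\Omega(n^3/n^{7/5})=\Omega(n^{8/5})$, whence $Q=\Omega(n^{4/5})$; otherwise every $G_i$ has $\Omega(n^{7/5})$ edges, $|C|=O(n^{1/5})$, and one builds an explicit $\mathrm{OR}_{\Omega(n^{1/5})}\circ\mathrm{AND}_{\Omega(n^{7/5})}$ restriction out of a fixed projection graph placed in the large part, again giving $\Omega(n^{4/5})$. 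Your proposal contains no substitute for this case; to close the gap you would need to add an argument that remains strong as $\alpha_H\to n$, which neither a threshold embedding nor a certificate-size bound can supply on its own.
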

This improves the $\Omega(n^{3/4})$ bound obtained via the minimum certificate size.

The second part of this paper concerns the Subgraph Homomorphism Problem for $H$, denoted by $f_{[H]}$. 
Here we show the following two Theorems: 
\begin{theorem}
For any $H$,
\[Q(f_{[H]}) = \Omega(n).\]
\label{thm:homo}
\label{thm-homo}
\end{theorem}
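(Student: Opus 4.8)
The plan is to show, for every $H$ with at least one edge, that some restriction of $f_{[H]}$ equals $\mathrm{OR}_N$ or $\mathrm{AND}_N$ on $N=\Theta(n^{2})$ variables; since $Q(\mathrm{OR}_N)=Q(\mathrm{AND}_N)=\Theta(\sqrt N)$ — and, more uniformly, $Q(T_N^{t})=\Omega(\sqrt N)$ for every $1\le t\le N$, which is the natural threshold-embedding viewpoint used elsewhere in this paper — this yields $Q(f_{[H]})=\Omega(n)$. (If $H$ has no edge then $f_{[H]}\equiv 1$ and the statement is vacuous, so assume $H$ has an edge; write $H\to G$ when $H$ admits a homomorphism into $G$.) I would split on $c:=\chi(H)$. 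First, if $H$ is bipartite then $H\to G$ iff $G$ has an edge (if $G$ has an edge then $K_{2}\to G$ and $H\to K_{2}\to G$; if not, the edge of $H$ has nowhere to go), so $f_{[H]}$ is literally $\mathrm{OR}$ on all $\binom n2$ edge variables.

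Second, suppose $c\ge 3$ and $c\le(1-\eps)n$ for a fixed $\eps\in(0,\tfrac12)$. Take the base graph $G_{0}$ to be the complete $(c-1)$-partite graph on $[n]$ with one part $P$ of size $n-c+2$ and $c-2$ singleton parts. Then $G_{0}\to K_{c-1}$ and $K_{c-1}\to G_{0}$, so $H\not\to G_{0}$; and for any non-edge $e$ with both endpoints in $P$, the graph $G_{0}+e$ contains a $K_{c}$ and is properly $c$-colourable, hence is homomorphically equivalent to $K_{c}$, so $H\to G_{0}+e$. Fixing the edges of $G_{0}$ present, all edges meeting the complement of $P$ absent, and letting the $\binom{n-c+2}{2}=\Theta(n^{2})$ pairs inside $P$ vary, $f_{[H]}$ restricts to $\mathrm{OR}$ on $\Theta(n^{2})$ variables.

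Third, suppose $c>(1-\eps)n$. A pigeonhole argument on an optimal colouring — at most $n-c$ colour classes have size $\ge 2$, and any two singleton classes must be adjacent, else we could merge them — shows $H$ contains a clique $K_{\omega}$ with $\omega=\Theta(n)$; write $V(H)=W_{0}\cup R$ with $W_{0}$ a largest clique and $|R|=n-\omega=O(\eps n)$. Since $K_{\omega}\le H$, every $G$ with $H\to G$ contains $K_{\omega}$. I would reserve $O(|R|)$ vertices of the input to play the roles of $R$ together with a few ``buffer'' vertices, wiring their edges (to each other and to the remaining vertex set $U$) so that: (i) when $G[U]$ is complete, $H$ maps into $G$ by sending $W_{0}$ onto a clique that contains $U$ and $R$ onto its reserved copy; and (ii) the buffers force the image of $W_{0}$ to lie essentially in $U$. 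Letting the $\Theta(n^{2})$ pairs inside $U$ be free and invoking monotonicity of $f_{[H]}$, this makes $f_{[H]}$ restrict to $\mathrm{AND}$ on $\Theta(n^{2})$ variables, finishing the proof.

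The hard part is exactly the converse in the third case: ruling out that $H$ maps into $G$ through some \emph{unintended} $\omega$-clique of $G$ — one using a few reserved vertices in place of vertices of $U$ — whose common neighbourhood nevertheless hosts a homomorphic image of the residual graph $H[R]$ with its prescribed attachments to $W_{0}$. Making this airtight for an arbitrary $H[R]$ is the main technical obstacle. I expect the clean route is to first replace $H$ by its core (this does not change $f_{[H]}$) and to choose the reserved vertices using the clique/residual decomposition of the core so that the residual part is itself rigid and cannot map into anything with fewer vertices; alternatively, one replaces the $\mathrm{AND}$-embedding in this case by a threshold embedding engineered to survive such alternative homomorphisms.
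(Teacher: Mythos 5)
Your first two cases are correct and essentially coincide with the paper's argument for the regime $\chi(H)<n/2$: your complete $(c-1)$-partite base graph with $c-2$ singleton parts and one large part $P$ is exactly the paper's restriction ``fix a clique $K_{t-2}$ and join all remaining vertices to it,'' and the resulting exact $\mathrm{OR}$ on $\Theta(n^2)$ pairs inside $P$ is the same embedding. (Your separate bipartite case is subsumed by this but harmless.)

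The genuine gap is your third case, and you identify it yourself: the $\mathrm{AND}$-embedding via a large clique, reserved vertices, and buffers is not carried through, and ruling out homomorphisms that route $W_0$ through an unintended clique meeting the reserved vertices is a real obstacle that the sketch does not overcome. The missing observation that makes this case immediate is that a homomorphism $H\to G$ forces $\chi(G)\ge\chi(H)=c$, and any graph of chromatic number at least $c$ has at least $\binom{c}{2}$ edges (pass to a $c$-critical subgraph, in which every vertex has degree at least $c-1$). Hence when $c=\Omega(n)$ every $1$-input of $f_{[H]}$ has Hamming weight $\Omega(n^2)$, i.e., the minimal certificate size is $\Omega(n^2)$, and Lemma~\ref{lem:certificate} directly gives $Q(f_{[H]})=\Omega(n)$. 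In effect the $\mathrm{AND}$ on $\Theta(n^2)$ variables comes for free by restricting to the support of a minimum-weight $1$-input: minimality of that input, not a bespoke wiring of buffer vertices or a passage to the core, is what eliminates all unintended homomorphisms. With that substitution your argument closes; as written, the third case is incomplete.
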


\begin{theorem}
\label{thm:homo2}
\label{thm:homo-3uniform}
For any $3$-uniform hypergraph $H$ on $n$ vertices:
\[Q(f_{[H]}) = \Omega(n^{3/2}).\]
\end{theorem}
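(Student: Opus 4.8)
The plan is to mirror the proof of Theorem~\ref{thm:homo} for graphs and lift it to $3$-uniform hypergraphs. Since an input $G$ is specified by $\binom{n}{3}=\Theta(n^{3})$ hyperedge bits, the desired bound $\Omega(n^{3/2})=\Omega(\sqrt{\binom{n}{3}})$ is exactly the Grover-type lower bound for the whole instance; it therefore suffices to produce, for every fixed $3$-uniform $H$ with at least one hyperedge, a restriction of $f_{[H]}$ — obtained by hard-wiring some of the hyperedge bits of $G$ — that computes $\mathrm{OR}_{m}$ on $m=\Theta(n^{3})$ free bits, and then invoke $Q(\mathrm{OR}_{m})=\Theta(\sqrt m)$ (a threshold $T_{m}^{t}$ with $t,m-t=\Theta(m)$ would serve equally well via the threshold bounds recalled in Section~\ref{sec:intro}). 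If $H$ has no hyperedge then $f_{[H]}\equiv 1$, so, as already implicit in Theorem~\ref{thm:alpha}, we tacitly assume $H$ has a hyperedge.

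First I would dispose of the ``easy'' hypergraphs. Write $E_{3}$ for the single hyperedge on $3$ vertices. If $H\to E_{3}$ — equivalently, $V(H)$ admits a $3$-colouring making every hyperedge rainbow, i.e.\ the strong chromatic number of $H$ is at most $3$ — then for every $G$ we have $H\to G$ iff $G$ has at least one hyperedge: the forward direction holds because the image of a hyperedge of $H$ under a homomorphism is a hyperedge of $G$, and the backward direction follows by composing $H\to E_{3}\to G$. Hence in this case $f_{[H]}$ is literally $\mathrm{OR}$ on all $\binom{n}{3}$ bits and $Q(f_{[H]})=\Theta(n^{3/2})$ with no further work.

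The remaining case, in which $H$ is not rainbow-$3$-colourable, is the $3$-uniform counterpart of the non-bipartite case in Theorem~\ref{thm:homo}, and it is where the construction must be done. Here I would hard-wire a ``template'' sub-hypergraph $G_{0}$ of $G$ — the hypergraph analogue of the complete-multipartite / blow-up gadgets used in the graph proof, for instance a balanced blow-up of $K_{\chi_{s}(H)}^{(3)}$, or of the core of $H$, from which a suitably chosen family of $\Theta(n^{3})$ hyperedges has been removed — arranged so that (i) $G_{0}$ alone admits no homomorphism from $H$, while (ii) restoring any single removed hyperedge creates one. By monotonicity, the resulting restriction of $f_{[H]}$ is exactly $\mathrm{OR}$ on $\Theta(n^{3})$ bits, which finishes the proof (and, by the same token, one also recovers the subgraph-isomorphism statement of Theorem~\ref{thm:3-uniform}, although only the homomorphism bound is needed here).

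The hard part is property (ii) of the template, and it is genuinely more delicate than in the subgraph-isomorphism setting precisely because homomorphisms need not be injective: even when the ``intended'' homomorphism of $H$ into $G_{0}$ plus a restored hyperedge is blocked, a degenerate homomorphism that folds a large portion of $H$ onto the hard-wired part might survive, making $f_{[H]}$ constantly $1$ on the restriction and destroying the reduction. Ruling this out forces one to choose $G_{0}$ with no ``low-complexity'' sub-configurations onto which $H$ could collapse — e.g.\ endowing the gadget with large girth in an appropriate hypergraph sense, or working directly with the core of $H$ so that collapsing is impossible by definition — and then to verify that the only homomorphisms $H\to G_{0}+e$ are the intended ones. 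Obtaining a single template that works uniformly over all $H$ that are not rainbow-$3$-colourable, rather than arguing case by case, is the crux of the argument.
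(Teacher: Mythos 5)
Your high-level plan (embed $\mathrm{OR}_m$ with $m=\Theta(n^3)$ via a hard-wired template, plus the minimal-certificate bound of Lemma~\ref{lem:certificate}) is the right shape, and your ``easy case'' of rainbow-$3$-colourable $H$ is fine, but the proof has a genuine gap: you never actually exhibit the template $G_0$ satisfying your properties (i) and (ii), and you explicitly flag the verification that no degenerate (folding) homomorphism survives as an unresolved ``crux.'' As written, the argument for every $H$ that is not rainbow-$3$-colourable is a plan rather than a proof. The paper resolves exactly this point with a much simpler gadget than the large-girth or core-blow-up constructions you gesture at: let $t=\chi(H)$ be the \emph{weak} chromatic number, so that $f_{[H]}(K_{t-1})=0$ and $f_{[H]}(K_t)=1$ for the complete $3$-uniform hypergraph $K_t$. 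When $t<n/2$, fix a complete hypergraph on $t-3$ vertices together with \emph{all} hyperedges meeting those vertices, and leave free the $\binom{n-t+3}{3}=\Omega(n^3)$ hyperedges among the remaining vertices. Any single free hyperedge completes a $K_t$, so $f_{[H]}=1$; with no free hyperedge the template is $(t-2)$-colourable (give all outside vertices one common colour), so no homomorphic image of $H$ fits and $f_{[H]}=0$. This colouring argument is precisely what kills the degenerate homomorphisms you worry about, with no girth or core machinery needed.

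There is a second, independent gap: your construction says nothing about the regime where the relevant chromatic parameter is large. If $t\geq n/2$, the template above leaves only $O(n^3)$ free bits in principle but the count $\binom{n-t+3}{3}$ can degenerate, and more importantly your blow-up templates would be built on $\Theta(n)$ colour classes and leave too few free variables to embed $\mathrm{OR}$ on $\Theta(n^3)$ bits. The paper handles this case separately: by a result of Alon, a $3$-uniform hypergraph that is not $k$-colourable has $\Omega(k^3)$ hyperedges, so every $1$-certificate of $f_{[H]}$ (being a hypergraph that is not $(t-1)$-colourable) has $\Omega(t^3)=\Omega(n^3)$ hyperedges, and Lemma~\ref{lem:certificate} gives $\Omega(n^{3/2})$ directly. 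You would need both the explicit gadget and this large-$t$ branch to close the proof.
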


Our proofs crucially rely on the duality of monotone functions and  appropriate embeddings of tribe and threshold functions.
 All our lower bounds hold for the approximate degree $\widetilde{\deg}(f)$, which is known to be strictly smaller than the quantum
query complexity \cite{amb}.

\subsection*{Organization}
Section~\ref{sec:prelims} contains some preliminaries. 
Section~\ref{sec:d-avg} and Section~\ref{sec:3-uniform} deal with the Subgraph Isomorphism Problem.
Section~\ref{sec:d-avg} contains the proofs of Theorem~\ref{thm:alpha}, Corollary~\ref{cor:d-avg} and Corollary~\ref{cor:iso}. 
Then Section~\ref{sec:3-uniform} contains the proof of Theorem~\ref{thm:4over5}. 
The next two sections (Section~\ref{sec:homo} and Section~\ref{sec:homo-appx}) involve the Subgraph Homomorphism Problem and contains the proof of Theorem~\ref{thm:homo} and Theorem~\ref{thm:homo2}.
Finally Section~\ref{sec:open} contains conclusion and some open ends.

\section{Preliminaries}
\label{sec:prelims}
In this section, we review some preliminary concepts and restate some previously known results.

Let $[n]$ denote the set $\{1, \ldots, n\}$.
\begin{definition}[Dual of a Property]
The dual $\mathcal{P}$, denoted by $\mathcal{P}^*$, is:
\[\mathcal{P}^*(x) := \neg \mathcal{P}(\neg x), \]
where $\neg x$ denotes the binary string obtained by flipping each bit in $x$. 

Note that $\mathcal{P}^{**} = \mathcal{P}$ and $Q(\mathcal{P}) = Q(\mathcal{P}^*)$.
\end{definition}

A property $\mathcal{P}$ is said to be {\em monotone increasing} if 
for every $x \leq y$ we have $\mathcal{P}(x) \leq \mathcal{P}(y)$, where $x \leq y$ denotes
$x_i \leq y_i$ for all $i$. 

Note that if $\mathcal{P}$ is monotone, then so is $\mathcal{P}^*$.

A {\em minimal certificate} of size $s$ for a monotone increasing property $\mathcal{P}$ is an
input $z$ such that (a) The hamming weight of $z$, i.e,  $|z|$, is $s$,
(b) $\mathcal{P}(z) = 1$, and (c)
for any $y$ with $|y| < s$, $\mathcal{P}(y) = 0$. Every minimal certificate $z$ can be uniquely associated with the subset $S_z := \{i \mid  z_i = 1\}$.
\begin{lemma}[Minimal Certificate \cite{bw}]
If $\mathcal{P}$ has a minimal certificate of size $s$ then
\[Q(\mathcal{P}) \geq \Omega(\sqrt s). \] 
\label{lem:certificate}
\end{lemma}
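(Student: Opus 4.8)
The plan is to reduce to the known quantum lower bound for the OR function (equivalently, Grover search). Suppose $\mathcal{P}$ has a minimal certificate $z$ of size $s$, with associated coordinate set $S_z = \{i : z_i = 1\}$, so $|S_z| = s$. The key point is that $\mathcal{P}$, restricted to inputs supported on $S_z$, behaves exactly like a threshold-at-$s$ (i.e., AND) on those $s$ coordinates: by condition (c) every input of Hamming weight $< s$ is rejected, and in particular every input that agrees with $z$ on all but one coordinate of $S_z$ (and is $0$ off $S_z$) has weight $s-1 < s$, hence is rejected; while $z$ itself is accepted by condition (b).

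First I would fix all coordinates outside $S_z$ to $0$ and all-but-the-restriction bookkeeping, obtaining a subfunction $g : \{0,1\}^{S_z} \to \{0,1\}$ defined by $g(x) = \mathcal{P}(x \cdot \mathbf{1}_{S_z})$ (the input that is $x$ on $S_z$ and $0$ elsewhere). Any quantum query algorithm for $\mathcal{P}$ yields one for $g$ with no more queries, since fixing input bits only removes the need to query them; hence $Q(\mathcal{P}) \ge Q(g)$. Next I would observe that $g(\mathbf{1}_{S_z}) = 1$ while $g$ evaluates to $0$ on every point of Hamming weight $s-1$ in $\{0,1\}^{S_z}$ — this is exactly the promise needed to embed the negated OR / unstructured search problem on $s$ bits: deciding whether the all-ones input is present versus an input with one missing bit requires distinguishing $\mathbf{1}_{S_z}$ from its $s$ neighbors of weight $s-1$. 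Equivalently, letting $y = \neg x$ on $S_z$, the function $x \mapsto g(\neg y)$ is $1$ at $y = \mathbf{0}$ and $0$ at each of the $s$ unit vectors, so it computes $\mathrm{OR}_s$ on the weight-$\le 1$ slice; by the standard Grover/hybrid-argument lower bound $Q(\mathrm{OR}_s) = \Omega(\sqrt{s})$ (e.g.\ via the polynomial method, giving the same bound for $\widetilde{\deg}$), we conclude $Q(g) = \Omega(\sqrt{s})$, and therefore $Q(\mathcal{P}) = \Omega(\sqrt{s})$.

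I expect the main (minor) obstacle is being careful that the restriction of a monotone \emph{property} to a sub-cube still legitimately inherits a query algorithm and that the minimality condition (c) is used correctly: we only know $\mathcal{P}(y)=0$ for \emph{all} $y$ with $|y|<s$, which is exactly what makes every weight-$(s-1)$ point of the sub-cube rejected — condition (b) alone would not suffice without (c), and (a) pins down the count $s$. Since the $\Omega(\sqrt{s})$ lower bound for $\mathrm{OR}_s$ holds already for approximate degree, the same proof shows the bound holds for $\widetilde{\deg}(\mathcal{P})$, consistent with the remark that all our lower bounds are in fact lower bounds on approximate degree.
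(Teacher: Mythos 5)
Your proof is correct: the paper itself gives no proof of this lemma (it is quoted from the Buhrman--de Wolf survey), and your argument --- restricting to the subcube supported on $S_z$, using condition (c) to see that the restriction computes $\mathrm{AND}_s$ on the top two Hamming levels, and invoking the $\Omega(\sqrt{s})$ Grover/polynomial-method lower bound for distinguishing $\mathbf{1}$ from its weight-$(s-1)$ neighbours --- is exactly the standard argument being cited. The observation that the bound already holds for $\widetilde{\deg}$ is also consistent with how the lemma is used later in the paper.
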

We say that two minimal certificates $z_1$ and $z_2$ {\em pack} together, if \\$S_{z_1} \cap S_{z_2} = \emptyset$.

\begin{lemma}[Packing Lemma \cite{yao87}] 
If $z_1$ is a minimal certificate of $\mathcal{P}$ and $z_2$ is a minimal certificate
of $\mathcal{P}^ *$ then $z_1$ and $z_2$ {\em cannot} be packed together.
\label{lem:packing}
\end{lemma}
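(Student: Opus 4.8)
The plan is to argue by contradiction, converting the combinatorial hypothesis ``$S_{z_1}$ and $S_{z_2}$ are disjoint'' into a coordinate-wise inequality of bit strings, and then closing the loop with the monotonicity of $\mathcal{P}$ and the definition of the dual.

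First I would record the only two facts the hypotheses supply. Since $z_1$ is a minimal certificate of $\mathcal{P}$, we have $\mathcal{P}(z_1) = 1$. Since $z_2$ is a minimal certificate of $\mathcal{P}^*$, we have $\mathcal{P}^*(z_2) = 1$, and unwinding the definition $\mathcal{P}^*(x) = \neg \mathcal{P}(\neg x)$ this reads $\neg \mathcal{P}(\neg z_2) = 1$, i.e. $\mathcal{P}(\neg z_2) = 0$. Notice that the minimality encoded in conditions (a) and (c) plays no role in the argument; all I will use is that $z_1$ is a $1$-input of $\mathcal{P}$ and $z_2$ is a $1$-input of $\mathcal{P}^*$.

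Next, suppose toward a contradiction that $z_1$ and $z_2$ pack, i.e. $S_{z_1} \cap S_{z_2} = \emptyset$. The key step is to read disjointness of supports as the statement $z_1 \leq \neg z_2$ in the coordinate-wise order on $\zo^m$: for each coordinate $i$, if $(z_1)_i = 1$ then $i \in S_{z_1}$, hence $i \notin S_{z_2}$, so $(z_2)_i = 0$ and therefore $(\neg z_2)_i = 1$. Thus every $1$ of $z_1$ is matched by a $1$ of $\neg z_2$, which is precisely $z_1 \leq \neg z_2$.

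Finally, applying the monotonicity of $\mathcal{P}$ to $z_1 \leq \neg z_2$ yields $1 = \mathcal{P}(z_1) \leq \mathcal{P}(\neg z_2) = 0$, the desired contradiction, so no such packing exists. The whole argument is a short duality computation rather than anything delicate; the one place that genuinely needs care is the middle step, namely getting the \emph{direction} of the bit-string inequality right (that disjoint supports give $z_1 \leq \neg z_2$ and not the reverse) and aligning it with the direction of monotonicity, so that the closing chain collapses to $1 \leq 0$ rather than to a vacuous $0 \leq 1$.
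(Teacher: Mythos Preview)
Your argument is correct and is exactly the standard one-line proof of this fact. Note, however, that the paper does not actually supply its own proof of this lemma: it is stated in the preliminaries with a citation to \cite{yao87} and used as a black box. So there is nothing to compare against, but your write-up would serve perfectly well as the missing justification.
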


\begin{lemma}[Tur\'an \cite{apss}]
\label{lem:turan}
If the average degree of a graph $G$ is $d$ then $G$ contains an independent 
set of size at least $\Omega(n/d)$.
\end{lemma}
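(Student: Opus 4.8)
The plan is to prove the stronger Caro--Wei bound, namely that $G$ has an independent set of size at least $\sum_{v \in V(G)} \frac{1}{d(v)+1}$, and then to deduce the claimed $\Omega(n/d)$ estimate by convexity. I would run the standard random-permutation argument (due to Alon and Spencer). If $G$ has no edges the statement is trivial (the whole vertex set is independent, and $\Omega(n/d)$ is to be read as $n$), so assume $d \geq 1$.

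First I would sample a uniformly random linear order $\sigma$ of $V(G)$ and let $I_\sigma$ be the set of vertices $v$ that precede, under $\sigma$, every one of their neighbours. A one-line check shows $I_\sigma$ is independent: if $u,v$ were adjacent and both lay in $I_\sigma$, then $u$ would precede $v$ and $v$ would precede $u$, a contradiction.

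Second I would compute $\E[|I_\sigma|]$. Fix a vertex $v$ of degree $d(v)$. The event $v \in I_\sigma$ depends only on the relative order of $v$ and its $d(v)$ neighbours, and among these $d(v)+1$ vertices $v$ is the $\sigma$-minimum with probability exactly $\frac{1}{d(v)+1}$. By linearity of expectation, $\E[|I_\sigma|] = \sum_{v \in V(G)} \frac{1}{d(v)+1}$, so there is a choice of $\sigma$ for which $|I_\sigma| \geq \sum_{v} \frac{1}{d(v)+1}$.

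Finally, since $t \mapsto \frac{1}{t+1}$ is convex on $[0,\infty)$, Jensen's inequality applied to the degree sequence $d(v_1),\dots,d(v_n)$ gives $\frac{1}{n}\sum_{v} \frac{1}{d(v)+1} \geq \frac{1}{\frac{1}{n}\sum_v d(v)+1} = \frac{1}{d+1}$, where $d = d_{avg}(G)$. Hence $G$ contains an independent set of size at least $\frac{n}{d+1} = \Omega(n/d)$, as claimed. There is no genuine obstacle here; the only points needing a moment's care are applying convexity in the correct direction and disposing of the degenerate low-degree regime $d < 1$.
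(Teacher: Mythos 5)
Your proof is correct. The paper does not prove this lemma at all --- it is stated as a known result with a citation --- and your argument is precisely the standard random-permutation proof of the Caro--Wei bound $\sum_v \frac{1}{d(v)+1}$ followed by Jensen's inequality, which is the textbook derivation from the cited source; every step, including the direction of the convexity inequality and the handling of the degenerate case $d<1$, checks out.
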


\begin{lemma}[Extended Tur\'an \cite{apss}]
\label{lem:extended-turan}
If the average degree of a $k$-uniform hypergraph $G$ is $d$ then $G$ contains an independent 
set of size at least $\Omega(n/d^{\frac{1}{k-1}})$.
\end{lemma}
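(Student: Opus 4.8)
The plan is to prove this by the probabilistic \emph{deletion} (alteration) method, which generalizes the standard probabilistic proof of the graph case (Lemma~\ref{lem:turan}). Throughout, let $G$ be a $k$-uniform hypergraph on $n$ vertices with $m$ hyperedges, where each hyperedge is a $k$-element vertex set and a vertex set is \emph{independent} if it contains no hyperedge. Since every hyperedge contributes $k$ to the sum of degrees, $\sum_v \deg(v) = km$, so the average degree is $d = km/n$, i.e. $m = dn/k$. I may assume $d \ge 1$: when $d < 1$ the hypergraph has $m < n/k$ hyperedges, so deleting one vertex from each already leaves an independent set of size more than $n(1 - 1/k) = \Omega(n)$, matching the claimed bound (which in this regime is understood as capped at $n$).

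First I would sample a random vertex subset $S$ by including each vertex independently with probability $p$, to be fixed later. Then $\E[|S|] = pn$, and a fixed hyperedge lies entirely inside $S$ with probability $p^k$, so the expected number of hyperedges contained in $S$ is $m p^k = (dn/k)\,p^k$. Deleting one vertex from each such bad hyperedge produces an independent set $S'$, and by linearity of expectation
\[
\E[|S'|] \;\ge\; pn - \frac{dn}{k}\,p^k .
\]

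Next I would optimize the right-hand side over $p \in (0,1]$. Setting the derivative $n - dn\,p^{k-1}$ to zero gives $p^{k-1} = 1/d$, i.e. $p = d^{-1/(k-1)}$, which lies in $(0,1]$ precisely because $d \ge 1$. Using $m\,p^{k-1} = (dn/k)(1/d) = n/k$, the subtracted term equals $p\cdot m p^{k-1} = pn/k$, so
\[
\E[|S'|] \;\ge\; pn - \frac{pn}{k} \;=\; \frac{k-1}{k}\,pn \;=\; \frac{k-1}{k}\,n\,d^{-1/(k-1)} .
\]
Since some outcome attains at least the expectation, $G$ has an independent set of size at least $\frac{k-1}{k}\,n/d^{1/(k-1)} = \Omega\!\big(n/d^{1/(k-1)}\big)$ for fixed $k$, which is the claim and reduces to $\Omega(n/d)$ when $k=2$.

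The computation is routine; the only genuine subtlety is guaranteeing that the optimal $p$ is a legitimate probability, which is exactly what the preliminary reduction to $d \ge 1$ secures. A secondary point to state carefully is that the constant in $\Omega(\cdot)$ absorbs the factor $(k-1)/k$, which is harmless since $k$ is a fixed constant ($k=3$ in all applications of this lemma).
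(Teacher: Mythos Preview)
Your argument is correct: this is precisely the classical alteration (random-sampling-plus-deletion) proof of Spencer's extension of Tur\'an's bound to $k$-uniform hypergraphs, and your handling of the edge case $d<1$ and of the constraint $p\le 1$ is clean. The paper itself does not supply a proof of this lemma at all---it merely quotes it from \cite{apss} as a known preliminary---so there is nothing to compare against; your write-up simply fills in the standard proof that the cited reference would give.
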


A Boolean function $f(x_1, \ldots, x_n)$ is called {\em transitive} if there exists
a group $\Gamma$ acting transitively on the $i$  such that $f$ is invariant under
the action, i.e., for every $\sigma \in \Gamma$ we have $f(x_{\sigma_1}, \ldots, x_{\sigma_n})
= f(x_1, \ldots, x_n)$.

Note that graph properties and hypergraph properties are transitive functions.
\begin{lemma}[Transitive Packing \cite{syz04}]
Let $f$ be a monotone transitive function on $n$ variables. If $f$ has a minimal certificate
of size $s$ then every certificate of $f^*$ must have size at least
$n/s$.
\label{lem:transitive-packing}
\end{lemma}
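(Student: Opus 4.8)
The plan is to derive Lemma~\ref{lem:transitive-packing} by combining the Packing Lemma (Lemma~\ref{lem:packing}) with the transitivity of $f$, through a double-counting argument governed by the orbit-stabilizer theorem. Let $S$ be a minimal $1$-certificate of $f$ with $|S| = s$, and let $\Gamma$ be a group acting transitively on $[n]$ under which $f$ is invariant.

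The first step is to turn the single certificate $S$ into a large family of certificates by acting with the group. Because $f(x_{\sigma_1},\dots,x_{\sigma_n}) = f(x_1,\dots,x_n)$ for all $\sigma \in \Gamma$, fixing the coordinates in $\sigma(S)$ to $1$ forces $f = 1$ exactly as fixing those in $S$ does; since $\sigma$ is a bijection, minimality is preserved as well. Hence every $\sigma(S)$ is again a minimal $1$-certificate of $f$ of size $s$. Next, let $T$ be an arbitrary $1$-certificate of $f^*$ whose support I denote $S_T$; I want to show $|S_T| \ge n/s$. It suffices to treat the case where $T$ is \emph{minimal}, since any certificate contains a minimal one of no larger support. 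Applying the Packing Lemma to the pair $\sigma(S)$ (a minimal certificate of $f$) and $T$ (a minimal certificate of $f^*$) shows they cannot pack, i.e.
\[ \sigma(S) \cap S_T \neq \emptyset \qquad \text{for every } \sigma \in \Gamma. \]

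The final step counts the incidence set $\{(\sigma, i) : \sigma \in \Gamma,\ i \in \sigma(S) \cap S_T\}$ in two ways. On one hand, the packing condition supplies at least one index $i$ for each $\sigma$, so the number of incidences is at least $|\Gamma|$. On the other hand, for a fixed $i$ the orbit-stabilizer theorem gives $|\{\sigma : \sigma(j) = i\}| = |\Gamma|/n$ for each $j$, so $i$ lies in $\sigma(S)$ for at most $s\cdot|\Gamma|/n$ choices of $\sigma$; summing over $i \in S_T$ bounds the number of incidences by $|S_T|\cdot s|\Gamma|/n$. Comparing the two counts yields $|\Gamma| \le |S_T|\cdot s|\Gamma|/n$, and therefore $|S_T| \ge n/s$, which is the desired bound.

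I expect the only delicate point to be the first step: verifying that the group action genuinely sends minimal certificates to minimal certificates of the same size, which is precisely where monotonicity and the meaning of invariance under $\Gamma$ are used, and ensuring the orbit-stabilizer count rests on a truly transitive action (so that $\{\sigma : \sigma(j)=i\}$ is a nonempty coset of $\mathrm{Stab}(j)$ of size $|\Gamma|/n$) rather than a merely symmetric one. Once these are pinned down, the two-way count is routine.
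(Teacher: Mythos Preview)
Your argument is correct. The paper does not actually supply a proof of this lemma: it is quoted in the Preliminaries as a known result from~\cite{syz04}, so there is no in-paper proof to compare against. That said, your double-counting over the incidence set $\{(\sigma,i):i\in\sigma(S)\cap S_T\}$, bounded below by $|\Gamma|$ via the Packing Lemma and above by $|S_T|\cdot s|\Gamma|/n$ via orbit--stabilizer, is exactly the standard proof of this fact and would be accepted without change.
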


A {\em Threshold function} $T_n^t(z_1, \ldots, z_n)$ is a function on $n$ variables such that $T_n^t$ outputs
$1$ if and only if at least $t$ variables are $1$.

We are now ready to prove the quantum query complexity lower bound for the Subgraph Isomorphism Problem.

\section{Subgraph Isomorphism for Graphs}
\label{sec:d-avg}

Before proving Theorem~\ref{thm:alpha} we first prove two lemmas.

Let $S_d$ denote the star graph with $d$ edges. Then $f_{S_d}$ is the property of having a vertex of at least degree $d$. First we show:
\begin{lemma}
 \[ Q(f_{S_d}) = \Omega(n)\]
 \label{lem:star}
\end{lemma}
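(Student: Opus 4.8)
The plan is to fix part of the adjacency matrix of $G$ so that $f_{S_d}$ collapses onto a function whose quantum query complexity (equivalently, approximate degree) is already known exactly — a threshold, or an OR of disjoint thresholds.

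First, the clean case. Note that $f_{S_d}$ is just the property ``$G$ has a vertex of degree $\ge d$''. Fix a vertex $v$, set every edge not incident to $v$ to $0$, and leave the $n-1$ edges $\{v,u\}$ free; the restricted function outputs $1$ iff at least $d$ of these are present, i.e.\ it is the threshold $T_{n-1}^{d}$. Since a restriction cannot raise the quantum query complexity, $Q(f_{S_d}) \ge Q(T_{n-1}^{d}) = \Theta\big(\sqrt{d(n-d+1)}\big)$, which is already $\Omega(n)$ whenever $d$ and $n-d$ are both $\Theta(n)$. Passing to the dual gives a symmetric statement: $f_{S_d}^{*}(G)=1$ iff $G$ has minimum degree $\ge n-d$, and setting every edge of $G[[n]\setminus\{v\}]$ to $1$ forces all $u\neq v$ to be satisfied, leaving ``$\deg(v)\ge n-d$'' $= T_{n-1}^{\,n-d}$; combined with $Q(f_{S_d})=Q(f_{S_d}^{*})$ this is the same bound.

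For $d$ close to $n$ — where the single threshold only yields $\Omega(\sqrt n)$ — I would instead build a tribe. Pick $k:=\floor{d/2}$ ``hub'' vertices, set all $\binom{k}{2}$ edges inside the hub set to $1$, set all edges among the remaining $n-k$ vertices to $0$, and leave the $k(n-k)$ hub-to-nonhub edges free. A non-hub then has degree at most $k<d$ and cannot help, while hub $i$ has degree $\ge d$ iff at least $d-k+1$ of its $n-k$ free edges are present; so the restricted function is $\bigvee_{i=1}^{k} T_{n-k}^{\,d-k+1}$ on pairwise disjoint blocks. Using $Q(\mathrm{OR}_{k}\circ g)=\Omega(\sqrt k\cdot Q(g))$ (which also holds for $\widetilde{\deg}$) together with $Q(T_{n-k}^{\,d-k+1})=\Theta\big(\sqrt{(d-k+1)(n-d)}\big)$, this is $\Omega\big(\sqrt{k(d-k+1)(n-d)}\big)=\Omega\big(d\sqrt{n-d}\big)$, which is $\Omega(n)$ for every $d$ that is at least a constant times $\sqrt n$; and for $d=1$ the property is plainly $\mathrm{OR}_{\binom{n}{2}}$, with $Q=\Theta(n)$.

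The main obstacle is small $d$ (say $d=o(\sqrt n)$): there every threshold/tribe embedding I can see tops out at $o(n)$, and I do not expect an approximate-degree argument to reach $\Omega(n)$. My fallback would be the quantum adversary method. Restrict $f_{S_d}$ to a bipartite ``hubs versus spokes'' pattern (all non-bipartite edges set to $0$), so the restricted function asks whether an $m\times m$ $0/1$ matrix $X$ has some row or column of weight $\ge d$; then run Ambainis's adversary on the relation joining each $(d-1)$-regular bipartite graph $B$ to every $B\cup\{e\}$ with $e\notin B$. Since in $B\cup\{e\}$ the unique heavy row meets the unique heavy column exactly in $e$, each augmentation determines $B$ uniquely, so the relation has parameters $m_{0}=\Theta(m^{2})$ and $m_{1}=\ell_{0}=\ell_{1}=1$, giving $Q(f_{S_d})=\Omega(m)=\Omega(n)$. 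Assembling these pieces — and in particular pushing the small-$d$ regime all the way up to $\Omega(n)$ — is where the real work lies; a single uniform argument, if one exists, would be preferable.
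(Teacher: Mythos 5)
Your argument is correct and does reach $\Omega(n)$ for every $d$, but it diverges from the paper's proof in the regime $d\le n/2$. For large $d$ your hub construction (a clique on $k=\lfloor d/2\rfloor$ vertices with free edges to the remaining $n-k$ vertices, yielding $\bigvee_{k} T_{n-k}^{\,d-k+1}$ on disjoint blocks) is essentially the paper's Case~1, which takes the clique to have $\lfloor n/2\rfloor$ vertices and likewise invokes the composition theorem for $\mathrm{OR}\circ T$; both give $\Omega(n)$ for all $d=\Omega(\sqrt n)$, though the paper only needs this for $d>n/2$. For $d\le n/2$ the paper is much shorter than your adversary fallback: since $f_{S_d}^*(G)=1$ iff $G$ has minimum degree at least $n-d$ (a fact you state yourself but use only to embed another threshold), every minimal certificate of $f_{S_d}^*$ has at least $n(n-d)/2=\Omega(n^2)$ edges when $d\le n/2$ --- this is the Packing Lemma (Lemma~\ref{lem:packing}) applied to $d$-stars --- and Lemma~\ref{lem:certificate} then gives $Q(f_{S_d})=Q(f_{S_d}^*)=\Omega(n)$ in two lines. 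Your Ambainis-adversary argument on the bipartite restriction ($(d-1)$-regular $B$ versus $B\cup\{e\}$, with the unique heavy row and column meeting in $e$, so $m_1=\ell_0=\ell_1=1$ and $m_0=m^2-m(d-1)$) is a legitimate self-contained alternative, and in the small-$d$ regime where you deploy it $m_0$ is indeed $\Theta(m^2)$, so it delivers $\Omega(n)$; together with the tribe for $d\ge\sqrt n$ and the $\mathrm{OR}_{\binom{n}{2}}$ observation for $d=1$, all cases are covered and the ``real work'' you worry about is in fact already done. Two cautions: $m_0=\Theta(m^2)$ fails as $d$ approaches $m$, so the adversary step must genuinely be confined to, say, $d\le n/4$ (your case split ensures this); and the adversary method does not lower-bound approximate degree, whereas the paper's certificate/duality route does, which matters for the paper's claim that all its bounds hold for $\widetilde{\deg}$. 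What the paper's route buys is brevity and that stronger conclusion; what yours buys is independence from the packing machinery.
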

\begin{proof}
We divide the proof into two cases:

{\sf Case 1:} $d > n/2$.

Fix a clique on the vertices $1, \ldots, \lfloor n/2 \rfloor$ and fix an independent set on the vertices $\lfloor n/2 \rfloor + 1, \ldots, n$. Note that we still have $\lfloor n/2 \rfloor \times \lceil n/2 \rceil$ edge-variables that are not yet fixed.
Now as soon as any vertex $v$ from the clique has $(d -\lfloor n/2\rfloor +1)$ edges to the independent set present, we have a $d$-star. Thus $f_{S_d}$ becomes an $OR_{\lfloor n/2 \rfloor} \circ T^{(d-\lfloor\frac{n}{2}\rfloor+1)}_{\lceil n/2 \rceil} $ function, which has a lower bound of $\Omega(n)$ via the Composition Theorem for quantum query complexity \cite{LMR}. 

{\sf Case 2:} $d \leq n/2$.

A minimum certificate of $f_{S_d}$ is a $d$-star. Now by the Lemma~\ref{lem:packing} we know that this $d$-star can not be packed with any minimal certificate of the dual $f^*_{S_d}$. Thus every vertex in the dual $f^*_{S_d}$ must have degree $ > n - d$. 
Hence the minimal certificate size is at least $\Omega(n^2)$ and $Q(f^*_{S_d}) = Q(f_{S_d}) = \Omega(n)$.

\end{proof}

Let $t$ denote the smallest integer such that $f_H^*(K_t) = 1$. 
\begin{lemma}
\[ Q(f_H) \geq \Omega(\sqrt{n(n-t)}). \]
\label{lem:threshold}
\end{lemma}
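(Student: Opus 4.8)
The plan is to embed a threshold function $T_m^k$ into $f_H$ for suitable $m, k$ and then invoke the known bound $Q(T_m^k) = \Theta(\sqrt{m(m-k)})$ together with the fact that restrictions cannot increase quantum query complexity. By definition of $t$, we have $f_H^*(K_{t-1}) = 0$ and $f_H^*(K_t) = 1$; equivalently, by the definition of the dual, $f_H(\overline{K_t}) = 0$ while $f_H(\overline{K_{t-1}}) = 1$, where $\overline{K_s}$ denotes the complement of a clique on some fixed set of $s$ vertices (i.e.\ an independent set on those $s$ vertices together with all other edges present). So: if we force an independent set of size $t$ somewhere, $H$ no longer embeds, but forcing an independent set of size only $t-1$ still leaves room for $H$.

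Concretely I would fix a set $A$ of $n - t + 1$ vertices and its complementary set $B$ of $t-1$ vertices; make all edges inside $B$, all edges between $A$ and $B$, and all edges inside $A$ except those incident to one distinguished vertex of $A$ \emph{absent} or \emph{present} as dictated by the configuration that realizes $f_H(\overline{K_{t-1}}) = 1$ on $B$ together with that one extra vertex. The remaining free variables should be a collection of $m \approx n - t$ edges from the distinguished vertex (or from the set $A$) to $B$, arranged so that $H$ embeds iff \emph{fewer than} some threshold number of these edges are present — because the presence of too many such edges would enlarge the forced independent structure past size $t$, killing the embedding. After taking the dual of this restriction the "fewer than a threshold" becomes an ordinary threshold $T_m^k$ with $m - k = \Theta(n - t)$ and $m = \Theta(n)$, giving $Q(f_H) \geq Q(T_m^k) = \Omega(\sqrt{n(n-t)})$.

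The main obstacle is getting the combinatorial reduction exactly right: one must check carefully that the two "boundary" configurations (the one that does embed $H$ and the one that does not) differ on precisely a set of variables that can be wired into a single threshold gate, i.e.\ that monotonicity in those variables is uniform (all increasing, or all decreasing after dualizing) and that $H$'s embeddability is genuinely controlled by the \emph{count} of present edges among them rather than by their identities. This is where the minimality of $t$ is used: it guarantees that the transition from "$H$ embeds" to "$H$ does not embed" happens at a clean cardinality threshold of forced independent vertices, so a threshold-function restriction faithfully captures it. Once the restriction is identified, the rest is immediate from $Q(T_m^k) = \Theta(\sqrt{m(m-k)})$ and $Q(\mathcal{P}) = Q(\mathcal{P}^*)$.
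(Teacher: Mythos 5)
Your overall plan (embed a threshold function and invoke its known lower bound) matches the paper's in spirit, but the mechanism you propose for the embedding --- a \emph{restriction} that fixes most edges and leaves a set of individual edge-variables free --- does not work, and the step you yourself flag as ``the main obstacle'' is exactly where it breaks. If the free variables are the edges from one distinguished vertex $v \in A$ to $B$, there are only $t-1$ of them, and the induced function is a threshold $T_{t-1}^{k}$ where $k = \min_{I,u} |N(u) \cap I|$ over maximum independent sets $I$ of $H$ and vertices $u \notin I$; this $k$ can be as small as $1$, so the best such a restriction can yield is roughly $\Omega(\sqrt{t})$, nowhere near the claimed bound. If instead you free all $(n-t+1)(t-1)$ edges between $A$ and $B$, embeddability of $H$ is no longer a symmetric function of those edges, so it is not a single threshold gate at all. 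The underlying problem is that the natural threshold here lives on \emph{vertices} (how many vertices are ``switched on''), and a restriction --- which can only fix edge-variables or identify each free edge with one input variable --- cannot convert a vertex-threshold into an edge-threshold with the required parameters.

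The paper sidesteps this by using a non-restriction embedding: substitute $x_{ij} := z_i \cdot z_j$, so that the input fed to $f_H^*$ is always the clique on $\{i : z_i = 1\}$. The resulting function $f'(z) = f_H^*(\{z_i z_j\})$ is $0$ on all inputs of Hamming weight $t-1$ and $1$ on all inputs of weight $t$ by the definition (and minimality) of $t$, i.e.\ it is exactly the vertex-threshold you want. The price of the quadratic substitution is only a factor of $2$ in the approximate degree, since each monomial of an approximating polynomial for $f_H^*$ at most doubles in degree, so $\widetilde{\deg}(f') \le 2\,\widetilde{\deg}(f_H^*)$; Paturi's theorem (Lemma~\ref{lem:paturi}) applied to $f'$ then gives the stated bound, and $Q(f_H) = Q(f_H^*) \ge \widetilde{\deg}(f_H^*)$ finishes the proof. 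To repair your argument you would need to replace the restriction step with this (or an equivalent) substitution argument; as written, the reduction to a single threshold gate is asserted but cannot be realized.
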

\begin{proof}

We {\em embed} $T_n^t$ in $f_H^*$ (on inputs of Hamming weight $t-1$ and $t$) via the following mapping:
Let $x_{ij} := z_i \cdot z_j$.  Let $f'(z_1, \ldots , z_n) := f_H^*(\{x_{ij}\})$. 
Note that $f' \equiv T_n^t$. Also note\footnote{Since $x_{ij} = z_i.z_j$, every monomial of $f_H^*$ of size $d$ becomes a monomial of size at most $2d$ in $f'$.} that $Q(f_H) = Q(f_H^*)$ and $\widetilde{deg}(f') \leq 2\cdot\widetilde{deg}(f_H^*)$.
Since $Q(f) \geq \widetilde{deg}(f)$, it remains to prove the following:

\begin{claim}
$\widetilde{deg}(f') = \Omega(\sqrt{n(n-t)})$
\label{claim-n-n-t}
\end{claim}

We need the following lemma due to Paturi \cite{paturi}:

\begin{lemma}
\label{lem:paturi}
Let $g$ be a function on $n$ variables such that $g(z) = 0$ for all $z$ with $|z| = t-1$ and $g(z) = 1$ for all $z$ with $|z| = t$. Then: 
$\widetilde{\deg}(g) = \Omega(\sqrt{n(n-t)})$.
\end{lemma}

 \noindent
\textit{Proof of Claim \ref{claim-n-n-t}.}
Note that $f'$ ($\equiv T_n^t$) satisfies the condition of the Lemma~\ref{lem:paturi}.

\hfill $\Box$

This finishes the proof of the Lemma~\ref{lem:threshold}.

\end{proof}

Now we are in a position to prove the Theorem~\ref{thm:alpha}.

\subsection*{Proof of Theorem~\ref{thm:alpha}.}
Recall that $t$ denotes the smallest integer such that $f_H^*(K_t) = 1$. 
We divide the proof into two cases:

{\sf Case 1:} $t > n/2$

In this case, we reduce the $f_H$ to $f_{S_p}$ for some $p = \Omega(n)$. 
Let $\nu_H$ denote the minimum vertex cover size of $H$.
Since $t > n/2$, we have $\nu_H < n/2$. We restrict $f_H$ by picking a clique on $\nu_H - 1$ vertices
and joining all the other $n - \nu_H + 1$ remaining vertices to each vertex in this clique. The resulting function takes a graphs on $p = n-\nu_H + 1$ vertices as input. Let's denote these vertices by $S$.

As the clique on $\nu_H - 1$ vertices can not accommodate all the vertices in the minimum vertex cover, at least one vertex $v$ must occur among $S$.

In fact, it is not difficult to see that the property is now reduced to finding a star graph with $d$ edges, $f_{S_d}$ where $d$ is defined as follows:
Let $C$ be a vertex cover. Furthermore let $d_{out}(v)$ denote the number of neighbors of a vertex $v$ in $C$ that are outside $C$ and $d_{out}(C)$ be the minimum over all such vertices $v$ in $C$.
 Then $d$ is the minimum $d_{out}(C)$ 
 of a minimum vertex cover $C$ of $H$ (minimized over all the minimum vertex covers).

Now from the Lemma~\ref{lem:star} we get $Q(f_H) = \Omega(n)$.

{\sf Case 2:} $t \leq n/2$

From the Lemma~\ref{lem:packing} we conclude that $t > \alpha_H$. Since $t \leq n/2$, we have $n-t = \Omega(n)$.
Hence from the Lemma~\ref{lem:threshold} we get the bound of $\Omega(\sqrt{n(n-t)})$,
which is $\Omega(\sqrt{\alpha_H \cdot n})$.

\hfill $\Box$

\subsection*{Proof of Corollary~\ref{cor:d-avg}}
(1) 
From
Tur\'an's theorem, we have: $\alpha_H \geq n/(2 \cdot d_{avg}(H))$. 

\noindent (2) Since $\alpha_H \cdot \chi_H \geq n$ we have $\alpha_H \geq n/\chi_H$.

\noindent (3) Since the critical probability of $H$ is $p$, the average degree
of $H$ is at most $p n$. Hence from Corollary~\ref{cor:d-avg}(1), we get the $\Omega({n/p})$ bound.
\hfill $\Box$

\subsection*{Proof of Corollary~\ref{cor:iso}.}
When $d_{avg}(H) \geq \sqrt n$
the Lemma~\ref{lem:certificate} gives an $\Omega(n^{3/4})$ bound. Otherwise when $d_{avg}(H) < \sqrt n$ we use the Corollary~\ref{cor:d-avg}(1), which gives the same bound.

\hfill $\Box$

\section{Subgraph Isomorphism for $3$-Uniform Hypergraphs}
\label{sec:3-uniform}
In this section we extend 
the $\Omega(n^{3/4})$ bound for the Subgraph Isomorphism for graphs 
to the $3$-uniform hypergraphs. 
In particular, we obtain an $\Omega(n^{4/5})$ bound for the Subgraph Isomorphism for 
$3$-uniform hypergraphs, improving upon the $\Omega(n^{3/4})$ bound obtained via the minimal certificate size.

Before going to the proof of Theorem~\ref{thm:3-uniform}, we extend the Lemma~\ref{lem:threshold} to the $3$-uniform hypergraphs.
Let $t$ be the smallest such that $f_H^*(K_t) = 1$. Note that
$t = \alpha_H+1$.
\begin{lemma} Let $H$ be a $3$-uniform hypergraph on $n$ vertices. Then:
\[ Q(f_H) \geq \Omega(\sqrt{n(n-t)}). \]
\label{lem:3-uniform-threshold}
\end{lemma}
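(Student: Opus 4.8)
\textbf{Proof plan for Lemma~\ref{lem:3-uniform-threshold}.}
The plan is to mimic the proof of Lemma~\ref{lem:threshold}, replacing the quadratic substitution $x_{ij} = z_i z_j$ by a cubic one appropriate to $3$-uniform hypergraphs. Concretely, for a fixed $3$-uniform hypergraph $H$ on $n$ vertices, I would work with the dual $f_H^*$, whose input is a $3$-uniform hypergraph on $n$ vertices given by its hyperedge-variables $\{x_{ijk}\}$. The key idea is that $t$ (the least integer with $f_H^*(K_t)=1$, which equals $\alpha_H+1$ since a clique on $t$ vertices contains every independent-set-complement pattern) has the following meaning: a hypergraph on a vertex set $T$ with $|T|=t$ that is \emph{complete} on $T$ satisfies $f_H^*$, but no hypergraph supported on fewer than $t$ vertices can. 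So I would set $x_{ijk} := z_i \cdot z_j \cdot z_k$ and let $f'(z_1,\dots,z_n) := f_H^*(\{x_{ijk}\})$. As in the graph case, whether $f'$ evaluates to $1$ depends only on which $z_i$ are set, and it equals $1$ precisely when the support of $z$ has size at least $t$; hence $f' \equiv T_n^t$.

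Next I would track the degree blow-up: since each $x_{ijk}$ becomes a degree-$3$ monomial in the $z$'s, every monomial of size $d$ in (an approximating polynomial for) $f_H^*$ becomes a monomial of size at most $3d$ in $f'$, so $\widetilde{\deg}(f') \le 3 \cdot \widetilde{\deg}(f_H^*)$. Combining with $Q(f_H) = Q(f_H^*) \ge \widetilde{\deg}(f_H^*)$, it suffices to lower bound $\widetilde{\deg}(f')$. But $f' \equiv T_n^t$ satisfies the hypothesis of Paturi's Lemma~\ref{lem:paturi}: it is $0$ on all inputs of Hamming weight $t-1$ and $1$ on all inputs of Hamming weight $t$. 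Therefore $\widetilde{\deg}(f') = \Omega(\sqrt{n(n-t)})$, which yields $Q(f_H) = \Omega(\sqrt{n(n-t)})$ after dividing by the constant $3$.

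The only point that needs a little care — and what I would flag as the main (minor) obstacle — is justifying the two equalities $t = \alpha_H + 1$ and $f' \equiv T_n^t$ in the hypergraph setting. For the first, I would argue that $K_t$ (the complete $3$-uniform hypergraph on $t$ vertices) contains $H$ as a subgraph iff $t \ge $ the number of non-isolated vertices of $H$, while the packing/duality relationship (Lemma~\ref{lem:packing} applied to the monotone property $f_H^*$) forces $t > \alpha_H$; combined with the fact that the complement of an independent set of size $\alpha_H$ is a certificate of $f_H^*$ on $n - \alpha_H$ vertices, one gets $t = \alpha_H + 1$ exactly as asserted in the paragraph preceding the lemma. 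For $f' \equiv T_n^t$: if $z$ has support $S$ with $|S| \ge t$, then $\{x_{ijk}\}$ is the complete hypergraph on $S$, which contains $H$, so $f_H^*(\{x_{ijk}\}) = 1$; and if $|S| < t$, the resulting hypergraph is supported on $< t$ vertices and hence cannot satisfy $f_H^*$ by definition of $t$. Everything else is a verbatim translation of the graph argument, so I would keep the write-up short and simply cite Lemma~\ref{lem:threshold}'s proof for the structure.
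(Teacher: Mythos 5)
Your proposal is correct and follows essentially the same route as the paper: embed $T_n^t$ into $f_H^*$ via the cubic substitution $x_{ijk} := z_i z_j z_k$, observe the factor-$3$ degree blow-up so that $\widetilde{\deg}(f') \le 3\,\widetilde{\deg}(f_H^*)$, and invoke Paturi's Lemma~\ref{lem:paturi} on $f' \equiv T_n^t$. The extra care you take in justifying $t = \alpha_H + 1$ and $f' \equiv T_n^t$ is a welcome addition but does not change the argument, which the paper presents identically.
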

\begin{proof}
Let $T_n^t(z_1, \ldots, z_n)$ denote the threshold function on $n$ variables that outputs
$1$ if and only if at least $t$ variables are $1$.
We {\em embed} a $T_n^t$ in $f_H^*$ (on inputs of Hamming weight $t-1$ and $t$) via the following mapping:
Let $x_{ijk} := z_i \cdot z_j \cdot z_k$.  Let $f'(z_1, \ldots , z_n) := f_H^*(\{x_{ijk}\})$. 
Note that $f' \equiv T_n^t$. Also note\footnote{Since 
$x_{ijk} = z_i \cdot z_j \cdot z_k$, every monomial of $f_H^*$ of size $d$ becomes a monomial of size at most $3d$ in $f'$.} that the $\widetilde{deg}(f') \leq 3\cdot\widetilde{deg}(f_H^*)$.
Since $Q(f) \geq \widetilde{deg}(f)$, it remains to prove that $\widetilde{deg}(f') = \Omega(\sqrt{n (n-t)})$, which follows from the Lemma~\ref{lem:paturi}.

\end{proof}
Now we give a proof of the Theorem~\ref{thm:3-uniform}.

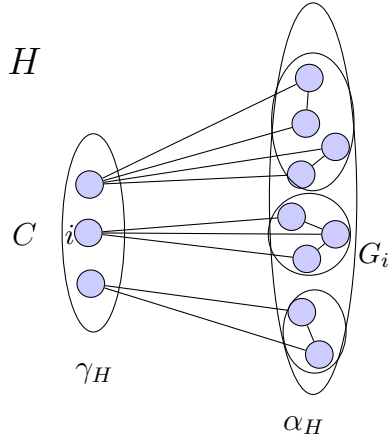
\begin{figure}[h]
\centering

\begin{tikzpicture}[>=stealth',shorten >=0pt,auto,node distance=1cm,
  thin,main node/.style={circle,fill=blue!20,draw,font=\sffamily\tiny\bfseries},scale=0.46]

\draw  (-2.7429,1.0286) ellipse (1.25 and 5.6429);
\draw  (-9.0289,0.0424) ellipse (0.8929 and 2.8571);
\node[main node] (1) at (-9.1369,1.4353) {};
\node[main node] (2) at (-9.1727,0.0424) {};
\node[main node] (3) at (-9.1012,-1.4218) {};

\node[main node] (4) at (-2.8512,4.5067) {};
\node[main node] (5) at (-2.9512,3.1924) {};

\node[main node] (11) at (-2.1012,2.521) {};
\node[main node] (12) at (-3.084,1.721) {};

\node[main node] (6) at (-3.3584,0.5067) {};
\node[main node] (7) at (-2.1012,0) {};
\node[main node] (8) at (-2.9227,-0.7076) {};

\node[main node] (9) at (-3.0655,-2.2433) {};
\node[main node] (10) at (-2.5655,-3.4647) {};

\draw  (-2.7512,3.2424) ellipse (1.1786 and 1.9857);
\draw  (-2.85,0) ellipse (1.175 and 1.18);
\draw  (-2.7,-2.8) ellipse (0.9 and 1.2);

\draw (1) edge (5);
\draw (1) edge (4);
\draw (4) edge (5);

\draw (1) edge (11);
\draw (1) edge (12);
\draw (11) edge (12);

\draw (2) edge (6);
\draw (2) edge (7);
\draw (2) edge (8);
\draw (6) edge (7);
\draw (7) edge (8);

\draw (3) edge (9);
\draw (3) edge (10);
\draw (9) edge (10);

      \node[style={font=\sffamily\large\bfseries}] at (-9,-4) {$\gamma_H$};
      
      \node[style={font=\sffamily\large\bfseries}] at (-3,-5.5) {$\alpha_H$};

      \node[style={font=\sffamily\Large\bfseries}] at (-11,5) {$H$};
      \node[style={font=\sffamily\large\bfseries}] at (-11,0) {$C$};

      \node[style={font=\sffamily\bfseries}] at (-9.7,0) {$i$};
      \node[style={font=\sffamily\bfseries}] at (-1.0,-0.6) {$G_i$};

\end{tikzpicture}
\caption{Structure of $H$}\label{fig:H}
\end{figure}

\subsection*{Proof of Theorem~\ref{thm:3-uniform}.}
We divide the proof into two main cases.

\noindent {\sf Case 1:} $\alpha_H > n/2$.
Let $H$ be a 3-uniform hypergraph on $n$ vertices. Let $C$ denote a minimal vertex cover of $H$. Let $|C|= \nu_H$. Note that the hypergraph induced on $V-C$ is empty. For a vertex $i \in C$ let $G_i$ denote the projection graph of the neighbors of $i$ on $V-C$, i.e., $(i,u,v) \in E(H)$ (See Figure~\ref{fig:H}). 

 Let $\mathcal{P}_H$ denote the restriction of the $f_H$ defined as follows:
set the hyper-clique on $\nu_H - 1$ vertices to be present and add all the hyper-edges incident on the vertices of this clique. 
Let $S$ denote the set of remaining $n - \nu_H + 1$ vertices.
The hyper-edges among $S$ are still undetermined. Note that $\mathcal{P}_H$ is a non-trivial property of $n-\nu_H +1$ vertex hypergraphs, since $H$ cannot be contained in the $\nu_H - 1$ hyper-clique and edges incident on it as the minimum vertex cover size of $H$ is $\nu_H$.

\begin{lemma}
\label{lem:pivot}
If $\exists C$,  $\exists i : |E(G_i)| = O(n^{7/5})$, then $Q(f_H) =\Omega(n^{4/5})$.
\end{lemma}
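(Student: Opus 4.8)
The plan is to exploit the structure of $G_i$ exactly as in the graph case, but now reducing the $3$-uniform problem to a threshold function on the edge slots of a single projection graph. Suppose $C$ is a minimum vertex cover, $i \in C$, and $|E(G_i)| = O(n^{7/5})$. I would work inside the restriction $\mathcal{P}_H$: fix the hyper-clique on $\nu_H-1$ vertices and all hyper-edges incident to it, leaving the $n - \nu_H + 1$ vertices of $S$ with their mutual hyper-edges free. Since $\alpha_H > n/2$ forces $\nu_H < n/2$, we have $|S| = \Omega(n)$, and at least one cover vertex — in fact $i$ — lies outside the clique, so its role in witnessing a copy of $H$ must be played by a vertex in $S$.

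The key step is to identify the remaining obligation as a threshold condition. After fixing the clique and everything incident to it, a copy of $H$ is completed precisely when some vertex $v \in S$ acquires enough hyper-edges into $S$ to simulate the neighborhood pattern that $i$ has in $H$. Because $i$'s neighborhood in $V - C$ is described by the graph $G_i$ with $|E(G_i)| = O(n^{7/5})$ edges, completing the embedding at $v$ requires roughly $|E(G_i)|$ specified hyper-edges among a pool of available slots. I would then set up the embedding of a threshold function: introduce $z$-variables indexed by the pairs forming $E(G_i)$ (there are $O(n^{7/5})$ of them), and let the relevant hyper-edge $x_{vuw}$ be present iff the corresponding $z$ is $1$, so that $f_H$ restricted this way computes (an OR over choices of $v$ of) a threshold $T_m^{t}$ with $m = |E(G_i)| = O(n^{7/5})$ and gap $m - t = \Theta(m)$ (using $\alpha$/packing to control $t$). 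By Lemma~\ref{lem:paturi} this threshold has approximate degree $\Omega(\sqrt{m(m-t)}) = \Omega(m) = \Omega(n^{7/5})$ — but that overshoots, so more care is needed on the gap.

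Concretely, I expect the honest bound to come from balancing: if the gap $m - t$ is large we get $\Omega(m) = \Omega(n^{7/5})$, which is already $\gg n^{4/5}$, while if the gap is small we instead invoke Lemma~\ref{lem:certificate} on a minimal certificate of size governed by how many hyper-edges must be present, which is $\Omega((n^{4/5})^2) = \Omega(n^{8/5})$ worth of edges in the worst balanced case — either way the square-root lands at $\Omega(n^{4/5})$. So the structure of the argument is: (i) pass to $\mathcal{P}_H$; (ii) observe the surviving property is a disjunction over $v \in S$ of a threshold on the $O(n^{7/5})$ slots of $G_i$; (iii) bound $t$ away from $m$ using the packing lemma (Lemma~\ref{lem:packing}) and $\alpha_H > n/2$; (iv) apply Lemma~\ref{lem:paturi} together with the composition/OR behavior to get $\widetilde{\deg} = \Omega(n^{4/5})$, and hence $Q(f_H) = \Omega(n^{4/5})$.

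The main obstacle I anticipate is step (ii)–(iii): making precise that the restricted property really is a clean threshold (or OR-of-thresholds) on the $G_i$ slots rather than something entangled with the rest of $S$, and pinning down the threshold value $t$ and its gap from $m = |E(G_i)|$ sharply enough that the resulting $\sqrt{m(m-t)}$ is exactly $\Theta(n^{4/5})$ and not something weaker. This requires checking that no "cheaper" copy of $H$ can be completed using hyper-edges outside the designated $G_i$ slots — i.e., that the minimum vertex cover structure genuinely localizes the bottleneck at vertex $i$ — and that the $z_i z_j z_k$-style substitution only inflates the degree by a constant factor, exactly as in the footnote to Lemma~\ref{lem:3-uniform-threshold}.
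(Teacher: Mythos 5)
There is a genuine gap. The paper's proof of this lemma is a two-line duality argument: the hypothesis $|E(G_i)| = O(n^{7/5})$ hands you a \emph{minimal certificate of size $O(n^{7/5})$} for the restricted property $\mathcal{P}_H$ (place one vertex of $S$ in the role of $i$ and insert the $|E(G_i)|$ hyper-edges of its projection); since $\mathcal{P}_H$ is a monotone transitive function on $\Theta(n^3)$ variables, Lemma~\ref{lem:transitive-packing} forces \emph{every} certificate of $\mathcal{P}_H^*$ to have size $\Omega(n^3/n^{7/5}) = \Omega(n^{8/5})$, and Lemma~\ref{lem:certificate} applied to the dual gives $Q(f_H) \geq Q(\mathcal{P}_H^*) = \Omega(n^{4/5})$. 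Your main line of attack --- embedding a threshold function on the $O(n^{7/5})$ edge slots of $G_i$ and invoking Paturi --- cannot work here, for two reasons. First, the hypothesis is an \emph{upper} bound on $|E(G_i)|$: if $|E(G_i)|$ is, say, $O(1)$, any function you embed on those slots lives on $O(1)$ variables and Lemma~\ref{lem:paturi} yields nothing close to $n^{4/5}$. The strength of the hypothesis is precisely that the primal certificate is \emph{small}, and the hardness must therefore be extracted from the dual, not from the $G_i$ slots. Second, the restricted property is not a threshold (nor an OR of thresholds) on those slots: completing a copy of $H$ at a pivot $v$ requires a \emph{specific} structured set of hyper-edges (a copy of the projection pattern), not an arbitrary $t$-subset, so the symmetry needed to identify it with $T_m^t$ is absent. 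The paper only obtains an $\bigvee\bigwedge$ structure in the complementary case $|E(G_i)| = \Omega(n^{7/5})$, and only after several further restrictions.

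Your fallback branch (``if the gap is small we instead invoke Lemma~\ref{lem:certificate} on a minimal certificate of size \ldots\ $\Omega(n^{8/5})$'') points in the right direction but is unjustified as written: the hypothesis gives a primal certificate of size $O(n^{7/5})$, and there is no reason the primal has certificates of size $\Omega(n^{8/5})$. The $\Omega(n^{8/5})$ figure belongs to the \emph{dual} property, and the only way to get it is through the transitive packing lemma (Lemma~\ref{lem:transitive-packing}), which your argument never invokes. That lemma is the essential missing ingredient; once you add it, the threshold machinery, Paturi's bound, and the gap analysis all become unnecessary for this particular lemma.
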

\begin{proof}
In this case $\mathcal{P}_H$ has a certificate of size $O(n^{7/5})$. Hence from Lemma~\ref{lem:transitive-packing} the certificate size of $\mathcal{P}^*_H$ is $\Omega(\frac{n^3}{n^{7/5}}) = \Omega(n^{8/5})$.
Now from the Lemma~\ref{lem:certificate} we get $Q(f_H) =\Omega(n^{4/5})$.

\end{proof}

Hence from now onwards we assume that for all $i, |E(G_i)| = \Omega(n^{7/5})$. Moreover, we may also note that $\nu_H =O(n^{1/5}),$  if not we have a minimal 
certificate for $\mathcal{P}_H$ of size  $\Omega(n^{8/5})$. And hence from the Lemma~\ref{lem:certificate} we already get the desired bound of $Q(f_H) = \Omega(n^{4/5})$.

\begin{figure}[h]
\centering
\includegraphics[width=0.36\textwidth]{./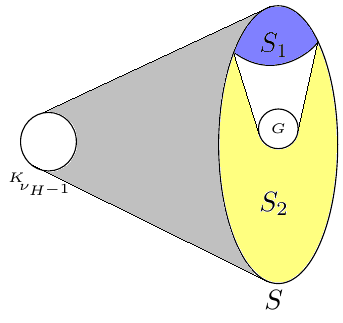}
\caption{The Restriction $\mathcal{P}''$: the hyper-clique $K_{\nu_H-1}$ is present, all the hyper-edges in the gray area are present. All the hyper-edges in blue region are present, all the hyper-edges in yellow region are absent. $G$ is fixed. White region symbolizes the hyper-edges with two-end points in $S_1$ and one in $S_2$ to be absent and one end point in $S_1$ and two end points in $S_2$ to be undertermined.}\label{fig:PH}
\end{figure}

Now we obtain a restriction $\mathcal{P}'$ of $\mathcal{P}_H$ as follows: divide $S$ into two parts say $S_1$ and $S_2$ of size $n_1$ and $n_2$ respectively, where we choose $n_1 = \Theta(n^{1/5})$ and $n_2 = \Theta(n)$. Set all the hyper-edges within $S_1$ to be present and set all the hyper-edges within $S_2$ to be absent. Also set all the hyper-edges with two endpoints in $S_1$ and one in $S_2$ to be absent. Only possible undetermined hyper-edges are with one endpoint in $S_1$ and two in $S_2$.
Note that even after setting all hyper edges in $S_1$ to be present we can safely assume that the property remains non-trivial. Otherwise we would have a certificate
for $\mathcal{P}_H$ of size $O(n^{3/5})$, hence the dual will have large ($ \gg \Omega(n^{8/5})$) certificates.

Let $G$ be a projection graph among all the $G_i$'s containing the least number of edges inside $S_2$. We further obtain a restriction $\mathcal{P}''$ by fixing a copy of $G$ inside $S_2$ and allowing only potential hyper-edges with one endpoint in $S_1$ and the other two endpoints forming an edge of $G$ (See Figure~\ref{fig:PH}).

Let $C$ be a vertex cover of $H$ of minimum cardinality. Note that in order to satisfy $\mathcal{P}_H$, at least one of the vertices from $C$ must move to $S$.
Let us call a vertex of $C$ that moves to $S$ as pivot. Let $k$ be the largest integer such that $P_H$ has a minimal certificate with $k$ pivots. 
Note that from Lemma~\ref{lem:pivot} each pivot has $\Omega(n^{7/5})$ edges incident on it.  Therefore if $k > n_1/2$ then we already have a minimal
certificate whose size is $\Omega(n^{8/5})$. Otherwise: $k \leq n_1 / 2$. First we argue that any pivot must belong to $S_1$. If on the contrary, it were in $S_2$
then the only possible edges incident on such a pivot $v$ are  of the form $(v, u, w)$ where $u \in S_1$ and $w \in S_2$.  But there can be at most $O(n^{6/5})$
such edges, which contradicts the fact that any pivot supports at least $\Omega(n^{7/5})$ edges.
Let the degree of a pivot be the number of edges inside $S_2$ that are adjacent to it.
Next we choose a certificate for $\mathcal{P}_H$ with at most $k \leq n_1/2$ pivots such that the degree of the minimum degree pivot is minimum possible.
Then we leave aside the minimum degree pivot in this certificate and  fix the $k-1$ other pivots and their projection on $S_2$.
From each of the remaining $n_1 - k + 1$ vertices we keep the projection
of the minimum degree pivot on $S_2$ as the only possible edges. 

 It is easy to see from minimality of our choice
 that at least one of these vertices must have all these $\Omega(n^{7/5})$ edges in order for the original graph to contain $H$.
 Thus we get an $\bigvee_{\Omega(n^{1/5})} \bigwedge_{\Omega(n^{7/5})}$ function as the restriction.

Since an $OR \circ AND$ on $m$ variables admits an $\Omega(\sqrt{m})$ lower bound on the quantum query complexity we get $Q(f_H) = \Omega(n^{4/5})$.

\noindent {\sf Case 2:} $\alpha_H \leq n/2$.
In this case we use Lemma~\ref{lem:3-uniform-threshold}. Since $n - \alpha_H \geq n/2$,
we get $Q(f_H) = \Omega(\sqrt{\alpha_H \cdot n})$.

Let $d$ denote the average degree of $H$. We consider two cases.

{\sf Case 2a:}   $d > n^{2/3}$.

In this case $|E(H)| > \Omega(n^{5/3})$. Hence from Lemma~\ref{lem:certificate} we get an $\Omega(n^{5/6})$ bound.

{\sf Case 2b:} $d \leq n^{2/3}$.

Here we use the extension of Tur\'an's Theorem (See Lemma~\ref{lem:extended-turan}
) to 3-uniform hypergraphs. Since the average degree is $O(n^{2/3})$, we get $\alpha_H \geq \Omega(n^{2/3})$. Therefore from Lemma~\ref{lem:3-uniform-threshold} we get $Q(f_H) = \Omega(n^{5/6})$.

This completes the proof of Theorem~\ref{thm:3-uniform}.

\hfill $\Box$

In the following two sections we study the Subgraph Homomorphism Problem. We first prove the quantum query complexity lower bounds for graphs and then for 3-uniform hypergraphs.

\section{Subgraph Homomorphism for Graphs}
\label{sec:homo}

\subsection*{Proof of Theorem~\ref{thm-homo}.}

Let $\chi(H)$ denote the chromatic number of $H$. Note that $H$ has a homomorphism into $K_t$ for $t = \chi(H)$, i.e., 
$f_{[H]}(K_{t-1}) = 0$ and $f_{[H]}(K_t) = 1$.

We consider the following two cases.

{\sf Case 1:} $t \geq n/2$:  

 In this case, it is easy to see that
 the minimum certificate size, $m(f_{[H]}) = \Omega(t^2) = \Omega(n^2).$ Hence from Lemma~\ref{lem:certificate} we get
 an $\Omega(n)$ lower bound on the quantum query complexity.

{\sf Case 2:} $t < n/2$: 

Consider the following restriction: We set a clique $K_{t-2}$ on $t-2$ vertices to be present and we also set all the edges from the remaining $n-t+2$ vertices to this clique to be present. Now notice that as soon as there is an edge between any two of the remaining  $n-t+2$ vertices, we have a $K_t$. Hence the property $f_{[H]}$ has become the property of containing an edge among the $n-t+2$ vertices. Since $t < n/2$, this is an OR function on $\Omega(n^2)$ variables. Thus $Q(f_{[H]}) = \Omega(n)$. 

\hfill $\Box$

\begin{remark} Our proof in fact shows that the minimum certificate size of
either $f_{[H]}$ or $f_{[H]}^*$ is $\Omega(n^2)$. Hence we also obtain
\[R(f_{[H]}) = \Omega(n^2)\].

\end{remark}

We now proceed to prove the quantum query complexity lower bound of the Subgraph Homomorphism Problem for 3-uniform hypergraph.

\section{Subgraph Homomorphism for $3$-Uniform Hypergraphs}
\label{sec:homo-appx}
\subsection*{Proof of Theorem~\ref{thm:homo-3uniform}.}

Proof of this theorem is similar to proof of Theorem~\ref{thm-homo}. 

Let $\chi(H)$ denote the chromatic number of $H$. Note that $H$ has a homomorphism into $K_t$ for $t = \chi(H)$, i.e., 
$f_{[H]}(K_{t-1}) = 0$ and $f_{[H]}(K_t) = 1$.

We consider the following two cases.

{\sf Case 1:} $t \geq n/2$:

 Unlike the graph homomorphism case, we cannot claim the presence of a $K_t$ in this case. However we can still use the following fact: 
 \begin{fact} (Alon \cite{alon85})
 If $H$ is a 3-uniform hypergraph which is not $k$ colorable then 
 \[|E(H)| = \Omega(k^3).\]
 \end{fact}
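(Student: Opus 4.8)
The plan is to prove the contrapositive: if a $3$-uniform hypergraph $H$ has $|E(H)| < c\,k^3$ for a suitable absolute constant $c>0$, then $H$ is $k$-colorable, meaning one can partition $V(H)$ into $k$ classes each of which is independent, so that no hyperedge lies entirely inside a single class. The naive first-moment estimate — color each vertex with a uniformly random color and note that a fixed hyperedge is monochromatic with probability $k^{-2}$ — only yields $k$-colorability when $|E(H)| < k^2$, so it is off by a factor of $k$. The source of this slack is that a single very high-degree vertex can, by itself, force many monochromatic hyperedges, and such a vertex cannot be fixed by local recoloring; I would therefore treat high- and low-degree vertices separately.

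First I would fix a threshold $\tau = \Theta(k^2)$ and split $V(H) = V_{\mathrm{high}} \cup V_{\mathrm{low}}$, where $V_{\mathrm{high}}$ is the set of vertices lying in at least $\tau$ hyperedges. Since $\sum_v \deg(v) = 3\,|E(H)|$, we get $|V_{\mathrm{high}}| \le 3\,|E(H)|/\tau$, which is below $k/2$ once $|E(H)| < c\,k^3$ with $c$ small. I would then give each vertex of $V_{\mathrm{high}}$ its own private color, using fewer than $k/2$ colors in all. The key observation is that every hyperedge meeting $V_{\mathrm{high}}$ is now automatically non-monochromatic, since it contains a vertex whose color is shared by nobody else; thus only hyperedges lying entirely in $V_{\mathrm{low}}$ remain to be controlled.

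It remains to color $V_{\mathrm{low}}$, whose induced subhypergraph has maximum degree below $\tau = \Theta(k^2)$, using the $\ge k/2$ leftover colors so that no hyperedge inside $V_{\mathrm{low}}$ is monochromatic. Here I would invoke the Lov\'asz Local Lemma: under a uniformly random coloring of $V_{\mathrm{low}}$ with $k' \ge k/2$ colors, the event $A_e$ that a given hyperedge $e$ is monochromatic has probability $p = k'^{-2} = O(k^{-2})$, and $A_e$ is mutually independent of all $A_f$ with $f \cap e = \emptyset$; the number of hyperedges meeting $e$ is at most $3\tau = O(k^2)$, so the dependency degree satisfies $D \le 3\tau$. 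Choosing the hidden constant in $\tau$ small enough, the symmetric LLL condition $\mathrm{e}\,p\,(D+1) \le 1$ holds, so a monochromatic-free coloring of $V_{\mathrm{low}}$ exists. Combining this with the private coloring of $V_{\mathrm{high}}$ (on disjoint color sets) gives a proper $k$-coloring of $H$, establishing the contrapositive and hence the stated bound $|E(H)| = \Omega(k^3)$.

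The hard part — and the exact reason the truth is $k^3$ rather than the first-moment $k^2$ — is the tension between two budgets: the threshold $\tau$ must be as large as $\Theta(k^2)$ for the Local Lemma to handle $V_{\mathrm{low}}$, yet small enough that the high-degree vertices number fewer than the colors available to them. I would have to verify that one choice of the constant in $\tau = \Theta(k^2)$ simultaneously satisfies the color-budget inequality $3\,|E(H)|/\tau < k/2$ and the LLL inequality $\mathrm{e}\cdot k'^{-2}\cdot(3\tau+1) \le 1$; a short calculation shows these are compatible (e.g.\ $\tau = \tfrac{1}{12\mathrm{e}}k^2$ works with $c$ a small multiple of it). Finally, the complete $3$-uniform hypergraph on $2k+1$ vertices — which is not $k$-colorable, since every independent set has at most two vertices, yet has only $\binom{2k+1}{3} = \Theta(k^3)$ edges — shows that the exponent $3$ cannot be improved.
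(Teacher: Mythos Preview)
The paper does not prove this statement; it is quoted as a fact with attribution to Alon and used as a black box in Case~1 of the proof of Theorem~5, so there is no in-paper argument to compare yours against. Your proof is correct and follows the standard probabilistic route: peel off the at most $3|E(H)|/\tau < k/2$ vertices of degree $\ge \tau = \Theta(k^2)$ and give each a private color (which already renders every hyperedge touching them non-monochromatic), then color the bounded-degree remainder with the symmetric Lov\'asz Local Lemma using the $\ge k/2$ leftover colors. The two constraints on $\tau$ --- enough private colors, and $\mathrm{e}\,p\,(D+1)\le 1$ with $p\le 4/k^2$ and $D\le 3\tau$ --- are indeed simultaneously satisfiable, and your tightness example $K^{(3)}_{2k+1}$ is the canonical one.
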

 Therefore, the minimum certificate size $m(f_{[H]}) = \Omega(t^3) = \Omega(n^3).$ Hence from Lemma~\ref{lem:certificate} we get an
 $\Omega(n^{3/2})$ lower bound on the quantum query complexity.

{\sf Case 2:} $t < n/2$: 

Consider the following restriction: We set a clique $K_{t-3}$ on $t-3$ vertices to be present and we also set all the edges from remaining $(n-t+3)$ vertices to this clique to be present. Now notice that as soon as there is an edge between any three of the remaining  $(n-t+3)$ vertices, we have a $K_t$. Hence the property $f_{[H]}$ has become the property of containing an edge among the $n-t+3$ vertices. Since $t < n/2$, this is an OR function on $\Omega(n^3)$ variables. Thus $Q(f_{[H]}) = \Omega(n^{3/2})$. 

\hfill $\Box$

\section{Conclusion \& Open Ends}
\label{sec:open}
We obtained an $\Omega(n^{3/4})$ lower bound for the quantum query complexity of Subgraph Isomorphism Problem for graphs, improving upon previously known $\Omega(n^{2/3})$ bound for the same. We extend our result to the $3$-uniform hypergraphs by exhibiting an $\Omega(n^{4/5})$ bound, which improves on previously known $\Omega(n^{3/4})$ bound. 
Besides the obvious question of settling the randomized and quantum query complexity of the Subgraph Isomorphism problem, there are a few interesting questions that might be approachable. We list some of them below:
\begin{question} 
Is it true that for any $n$-vertex graph $H$ we have: \begin{itemize}
\item [(a)] $R(f_H) = \Omega(\alpha_H \cdot n)$?

\item [(b)] $R(f_H) = \Omega(n^2/d_{avg}^H))$?

\item [(c)] $R(f_H) = \Omega(n^2/\chi_H)$?
\end{itemize}
\end{question}

\begin{question} Is it true that for any $3$-uniform hypergraph $H$ we have:
\[Q(f_H) = \Omega(n)?\]
\end{question}

\section{Acknowledgement}
The authors would like to thank the anonymous reviewers for their valuable comments and suggestions to improve this article. 
The authors also thank Biswaroop Maiti for the discussion in the beginning phase of this work.

\bibliographystyle{plain}
\bibliography{file.bbl}

\end{document}